\newcommand{\eps}{\varepsilon}
\newcommand{\SynAppx}{\text{\sc{SynAppx}}}
\newcommand{\res}{\mathrm{Syn}}
\newcommand{\val}{\mathrm{Val}}
\newcommand{\PCP}{\mathrm{PCP}}
\newcommand{\FPCP}{\mathrm{FPCP}}
\newcommand{\fsat}{\mathrm{Fsat}}
\newcommand{\poly}{\mathrm{poly}}
\newcommand{\pr}{\mathrm{Pr}}
\newcommand\B{\{0,1\}}      
\newcommand\N{\mathbb N}
\newcommand{\PTIME}{\mathrm{P}}
\newcommand{\NP}{\mathrm{NP}}
\newcommand{\DP}{\mathrm{DP}}
\newcommand{\coNP}{\mathrm{coNP}}
\newcommand{\ZPP}{\mathrm{ZPP}}
\newcommand{\inparen}[1]{\left(#1\right)}             
\definecolor{Mygray}{gray}{0.8}
\newtheorem{theorem}{Theorem}[section]
\newtheorem{corollary}[theorem]{Corollary}
\newtheorem{lemma}[theorem]{Lemma}
\newtheorem{proposition}[theorem]{Proposition}
\newtheorem{definition}[theorem]{Definition}
\newtheorem{remark}[theorem]{Remark}
\newenvironment{proof}{\noindent{\bf Proof:}\hspace*{1em}}{\qed\bigskip}
\newcommand{\qed}{\hfill\ensuremath{\square}}
\title{Strong inapproximability of the\\ shortest reset word\thanks{Supported by the \emph{NCN} grant 2011/01/D/ST6/07164.}}
\author[1]{Paweł Gawrychowski\thanks{Currently holding a post-doctoral position at Warsaw Center of Mathematics and Computer Science.}}
\author[2]{ Damian Straszak\thanks{Part of the work was carried out while the author was a student at Institute of Computer Science, University of Wrocław, Poland.}}
\affil[1]{Institute of Informatics, University of Warsaw, Poland}
\affil[2]{EPFL, Switzerland}
\begin{document}

\maketitle
\begin{abstract}

The Černý conjecture states that every $n$-state synchronizing automaton has a reset word of
length at most $(n-1)^2$. We study the hardness of finding short reset words. It is known that the exact version of the problem, i.e., finding
the shortest reset word, is $\NP$-hard and $\coNP$-hard, and complete for the $\DP$ class, and that approximating
the length of the shortest reset word within a factor of $O(\log n)$ is $\NP$-hard [Gerbush and Heeringa, CIAA'10], even for the binary alphabet [Berlinkov, DLT'13].
We significantly improve on these results by showing that, for every $\eps>0$, it is $\NP$-hard to approximate the length of the
shortest reset word within a factor of $n^{1-\eps}$. This is essentially tight since a simple $O(n)$-approximation algorithm exists.
\end{abstract}

\section{Introduction}\label{sec:introduction}
Let $A=(Q,\Sigma,\delta)$ be a deterministic finite automaton. We say that $w\in \Sigma^*$ resets (or synchronizes) $A$ if $|\delta(Q,w)|=1$,
meaning that the state of $A$ after reading $w$ does not depend on the choice of the starting state.
If at least one such $w$ exists, $A$ is called synchronizing. In 1964 Černý conjectured that every synchronizing $n$-state automaton admits a reset word of 
length $(n-1)^2$. The problem remains open as of today. It is known that an $\frac{n^3-n}{6}$ bound holds~\cite{Pin83} and that there are automata 
requiring words of length $(n-1)^2$. The conjecture was proved for various special classes of
automata~\cite{Ananichev2005,Eppstein1990,Grech13,Kari,Rystsov1997,Steinberg}.
For a thorough discussion of the Černý conjecture see~\cite{VolkovSurvey}.

Computational problems related to synchronizing automata were also studied. It is known that finding the shortest reset word
is both $\NP$-hard and $\coNP$-hard~\cite{Eppstein1990}. Moreover, it was shown to be $\DP$-complete~\cite{Olschewski}. 

In this paper, rather than looking at the exact version, we consider the problem of finding {\emph short} reset words for automata, or to put it 
differently, the question of approximating the length of the shortest reset word. For a given $n$-state synchronizing automaton, we want to find
a reset word which is at most $\alpha$ times longer than the shortest one, where $\alpha$ can be either a constant or a function of $n$.
There is a simple polynomial time algorithm achieving $O(n)$-approximation~\cite{Gerbush11}.

\subsection{Previous work and our results.} 
Berlinkov showed that finding an $O(1)$-approximation is $\NP$-hard by giving a combinatorial reduction from SAT~\cite{BerlinkovConst}. Later, Gerbush and Heeringa~\cite{Gerbush11} used the $\log n-$approximation hardness of SetCover~\cite{Feige98}
to prove that $O(\log n)$-approximation of the shortest reset word is $\NP$-hard. Finally, Berlinkov~\cite{Berlinkov13} extended their result to hold
even for the binary alphabet, and conjectured that a polynomial time $O(\log n)$-approximation algorithm exists.
We refute the conjecture by showing that, for every constant $\eps>0$, no polynomial time $n^{1-\eps}-$approximation is possible 
unless $\PTIME=\NP$. This together with the simple $O(n)-$approximation algorithm gives a sharp threshold result for the shortest reset word problem.

The mathematical motivation and its algorithmic version considered in this paper are closely connected, although
not in a very formal sense. All known methods for proving bounds on the length of the shortest reset word are actually
based on explicitly computing a short reset word (in polynomial time). The best known method constructs
a reset word of length $\frac{n^3-n}{6}$, while the (most likely) true upper bound is just $(n-1)^2$, which is
smaller by a factor of roughly $\frac{n}{6}$.
Similarly, the best known (polynomial time) approximation algorithm achieves $O(n)$-approximation. Hence it is reasonable
to believe that an $o(n)$-approximation algorithm could be used to significantly improve the upper bound on the length of the
shortest synchronizing word to $o(n^3)$. In this context, our result
suggests that improving the bound on the length of the shortest synchronizing word to $O(n^{3-\eps})$ requires
non-constructive tools.

The main insight is to start with the PCP theorem. We recall the notion of
constraint satisfaction problems, and using the result of Håstad and Zuckerman provide a class
of hard instances of such problems with specific properties tailored to our particular application.
Then, we show how to appropriately translate such a problem into a synchronizing automaton.


\subsection{Organization of the paper.} We provide the necessary definitions and the background on finite automata in the preliminaries.
We also introduce the notion of probabilistically checkable proofs and state the PCP theorem, then define constraint satisfaction problems
and their basic parameters.

In the next three sections we gradually move towards the main result. In Section~\ref{sec:constant} we prove that $(2-\eps)$-approximation of
the shortest reset word is $\NP$-hard. In Section~\ref{sec:power} we strengthen this by showing that, for a small fixed $\eps>0$,
$n^{\eps}-$approximation is also $\NP$-hard. Finally, in Section~\ref{sec:main}, we provide more background on probabilistically
checkable proofs and free bit complexity, and prove that, for every $\eps >0$, even $n^{1-\eps}$-approximation
is $\NP$-hard. Even though the final result subsumes
Sections~\ref{sec:constant}~and~\ref{sec:power}, this allows us to gradually
introduce the new components. 
 
In the Appendix we sketch how deduce the subconstant error PCP theorem from the classical version.

\section{Preliminaries} \label{sec:preliminaries}
\subsection{DFA.}
A deterministic finite automaton (in short, an automaton) is a triple $A=(Q,\Sigma, \delta)$, where $Q$ is a nonempty finite set of states, $\Sigma$ is a nonempty finite alphabet, and $\delta$ is a transition function $\delta:Q\times \Sigma \to Q$. In the usual definition one includes additionally a starting state and a set of accepting states, which are irrelevant in our setting. Equivalently, we can treat an automaton as a collection of $|\Sigma|$ transformations of a finite set $Q$. We consider words over $\Sigma$, which are finite sequences of letters (elements of $\Sigma$). The empty word is denoted by $\eps$, the set of words of length $n$ by $\Sigma^n$, and the set
of all words by $\Sigma^{*}$. For $w\in \Sigma^*$, $|w|$ stands for the length of $w$ and $w_i$ is the $i$-th letter of $w$, for any $i\in\{1,2,\ldots,|w|\}$.

If $A=(Q,\Sigma, \delta)$ is an automaton, then we naturally extend $\delta$ from single letters to whole words by defining $\delta(q,\eps)=q$ and $\delta(q,wa)=\delta(\delta(q,w),a)$.
For $P\subseteq Q$ we denote by $\delta(P,w)$ the image of $P$ under $\delta(\cdot,w)$.

\subsection{Synchronizing Automata.}
An automaton $A=(Q,\Sigma, \delta)$ is synchronizing if there exists a word $w$ for which $|\delta(Q,w)|=1$. Such $w$ is then called a synchronizing (or reset) word and the length of a shortest such word
is denoted by $\res(A)$.
One can check if an automaton is synchronizing in polynomial time by verifying that every pair of states can be synchronized to a single state.


\vspace{1pt}
\begin{mdframed}
$\SynAppx(\Sigma,\alpha)$
\\
Given a synchronizing $n$-state automaton $A$ over an alphabet $\Sigma$, find a 
word of length at most $\alpha \cdot \res(A)$ synchronizing $A$.
Here both $\alpha$ and $|\Sigma|$ can be a function of $n$.
\end{mdframed}
\vspace{1pt} 

We are interested in solving $\SynAppx(\Sigma, \alpha)$ in polynomial time, with $\alpha$ as small
as possible.

\subsection{$O(n)-$Approximation.}\label{subsec:approximation}
It is known~\cite{Gerbush11} that for any fixed $k$ the problem $\SynAppx(\Sigma, \frac{n}{k})$ can be solved in $O(n^{k+1})$ time (we assume that $\Sigma$ is of constant size). The basic idea is that, for a given automaton $A=(Q,\Sigma, \delta)$, we construct a graph $G$ with the
vertex set $V=\{S\subseteq Q: |S|\leq k+1\}$ and a directed edge $S\rightarrow\delta(S,a)$ labeled with $a$ for every $S\in V$ and $a\in \Sigma$.
Then for a given $S\in V$ the shortest word synchronizing $S$ to a single state corresponds to the shortest path connecting $S$ to some singleton set
$\{q\}\in V$. Each such word is of length at most $\res(A)$. The algorithm works in $\left\lceil\frac{n}{k}\right\rceil$ phases. We start with the full set of states to reset $R:=Q$ 
and with an empty word $w:=\eps$, and in each phase we will decrease the size of $R$ by $k$, while assuring that $\delta(Q,w)=R$. In a single phase we take any 
subset $S$ of $R$ of size $k+1$ (if possible) and find the shortest word $w'$ resetting $S$ to a single state (note that $|w'|\leq \res(A)$). We set 
$w:=ww'$, $R:=\delta(R,w')$ and continue. One can easily see that in the end we obtain a synchronizing word $w$ of length at most
$\left\lceil\frac{n}{k}\right\rceil\cdot\res(A)$.

\subsection{Cubic Bound for the Černý Conjecture.}
Setting $k=1$ in the reasoning from Section~\ref{subsec:approximation}, we obtain an upper bound for $\res(A)$. This follows from the fact that the graph $G$ has $O(n^2)$ vertices and consequently every shortest path has length $O(n^2)$. In the end we have $\res(A)\leq |w|=(n-1)\cdot O(n^2)=O(n^3)$. In contrast, the famous Černý conjecture states that for every synchronizing automaton $A$ it holds $\res(A)\leq (n-1)^2$. Interestingly, the best bound known up to now is $\frac{n^3-n}{6}$~\cite{Pin83}, which is also cubic. Any $o(n^3)$ upper bound would be a very interesting result for this problem.

\subsection{Alphabet Size.} 
In the general case, the size of the
alphabet can be arbitrary. Our construction will use $\Sigma=\{0,1,2\}$, which
can be then reduced to the binary alphabet using the method of Berlinkov~\cite{Berlinkov13}.
It is based on encoding every letter in binary and adding some intermediate states. For completeness we state the appropriate lemma and sketch its proof.

\begin{lemma}[Lemma 7 of~\cite{Berlinkov13}]
\label{lemma:binarization}
Suppose $\SynAppx(\{0,1\},n^{\alpha})$ can be solved in polynomial time for some $\alpha \in (0,1)$, then so can be $\SynAppx(\Sigma,O(n^{\alpha}))$ for any $\Sigma$ of constant size.
\end{lemma}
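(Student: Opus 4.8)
The plan is to encode each of the $s := |\Sigma| = O(1)$ letters of $\Sigma$ by a short binary string and to replace every state of the given automaton by a small ``decoding gadget'' that reads one codeword and then behaves like the corresponding letter of $A$. The subtle point I would need to handle is that a reset word of $A$, translated letter-by-letter into binary, need not reset the new automaton: a computation may start \emph{inside} a gadget, so the first few bits are consumed by a partial codeword rather than a full one, and after that the remaining bits no longer represent the intended reset word. To avoid this I would choose the codewords so that this ``half-read'' situation can always be flushed in a single bit.

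Concretely, I would use the prefix-free code over $\{0,1\}$ in which letter $\ell_i$ (for $i=1,\dots,s-1$) gets the codeword $1^{\,i-1}0$ and $\ell_s$ gets $1^{\,s-1}$; its longest codeword has length $k:=s-1=O(1)$, and the only proper prefixes of codewords are $\varepsilon,1,11,\dots,1^{\,s-2}$. I build $B=(Q',\{0,1\},\delta')$ by taking, for each state $q$ of $A$, the states $q_u$ indexed by these proper prefixes $u$ (so $q_\varepsilon$ plays the role of $q$, there are $s-1$ states per gadget, and $|Q'|=(s-1)|Q|=:N=\Theta(n)$); set $\delta'(q_u,b)=q_{ub}$ when $ub$ is still a proper prefix of a codeword, and $\delta'(q_u,b)=\delta(q,a)_\varepsilon$ when $ub$ is the complete codeword of a letter $a$. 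The point of this particular code is that $u0$ is a complete codeword for \emph{every} proper prefix $u$, so reading a single $0$ from \emph{any} state of $B$ lands in some gadget root $p_\varepsilon$.

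For the forward direction I would take a shortest reset word $w$ of $A$; since prepending a letter to a reset word again gives a reset word, $0\cdot\mathrm{enc}(w)=\mathrm{enc}(\ell_1 w)$ is the encoding of a reset word of $A$. From any state $s'\in Q'$ the leading $0$ sends $s'$ to some root $p_\varepsilon$, and then $\mathrm{enc}(w)$ sends $p_\varepsilon$ to $\delta(p,w)_\varepsilon$, which is the unique reset state of $A$ independently of $p$. Hence $B$ is synchronizing and $\res(B)\le 1+k\cdot\res(A)=O(\res(A))$. For the backward direction I would take any reset word $v$ of $B$ and parse it greedily (unambiguous, as the code is prefix-free) as $v=\mathrm{enc}(b_1)\cdots\mathrm{enc}(b_m)\cdot r$ with $r$ a proper prefix of a codeword, and set $u(v):=b_1\cdots b_m\in\Sigma^*$, so $|u(v)|=m\le|v|$. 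Reading $v$ from $q_\varepsilon$ leads to $\delta(q,u(v))_r$; since $v$ resets $B$ and $p\mapsto p_r$ is injective, $u(v)$ resets $A$.

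Finally, given $A$ over $\Sigma$, I construct $B$ in polynomial time, run the assumed polynomial-time algorithm for $\SynAppx(\{0,1\},N^{\alpha})$ on $B$ to get a reset word $v$ with $|v|\le N^{\alpha}\res(B)=O(n^{\alpha})\cdot\res(A)$, and output $u(v)$, which resets $A$ and satisfies $|u(v)|\le|v|=O(n^{\alpha})\cdot\res(A)$; this solves $\SynAppx(\Sigma,O(n^{\alpha}))$ in polynomial time. I expect the only real obstacle to be the one flagged above — guaranteeing that no computation of $B$ can get ``stuck'' partway through a gadget — and it is exactly what the right-leaning prefix code together with the one-bit flush are designed to neutralise; everything else is bookkeeping, using that encoding inflates lengths only by the constant $k$ while decoding can only shorten them.
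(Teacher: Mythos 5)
Your proof is correct. The paper does not give the construction at all: it simply invokes Lemma~7 of Berlinkov~\cite{Berlinkov13} as a black box, which supplies an automaton $B$ on $\tilde n=2|\Sigma|n$ binary states with $t\cdot\res(A)\le\res(B)\le t(1+\res(A))$ for $t=\lceil\log_2|\Sigma|\rceil+1$, and then only does the resulting arithmetic. You instead build the binarization from scratch with a different code: the paper's $t$ indicates a fixed-length encoding of length $\lceil\log_2|\Sigma|\rceil$ plus one ``flush'' bit, while you use the variable-length prefix code $\ell_i\mapsto 1^{i-1}0$ (longest codeword $s-1$), giving $N=(s-1)n$ states and the slightly looser but still constant-factor bounds $\res(A)\le\res(B)\le 1+(s-1)\res(A)$. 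Both encodings work equally well for the lemma since only the $O(1)$ blow-up matters. One point in favor of your version is that it actually outputs a reset word $u(v)$ of $A$ with $|u(v)|\le|v|$ by decoding the binary reset word, which is literally what $\SynAppx$ demands; the paper's sketch only produces a numerical estimate $x/t$ of $\res(A)$ and leaves the word recovery implicit. Two small housekeeping remarks: you tacitly use $\res(A)\ge 1$ when bounding $1+(s-1)\res(A)$ by $s\cdot\res(A)$ (fine for $n\ge 2$, and $n=1$ is trivial), and for $s=1$ the code degenerates (the empty codeword), so one should assume $s\ge 2$, which is again without loss of generality.
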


\begin{proof}
As shown in Lemma 7 of~\cite{Berlinkov13}, given an $n$-state automaton $A$ over an alphabet $\Sigma$, one can efficiently construct
an automaton $B$ on $\tilde{n}:=2|\Sigma|n$ states over the binary alphabet, such that $\res(A)t \leq \res(B) \leq t(1+\res(A))$, where $t=\left\lceil \log_{2}|\Sigma|\right\rceil+1$. Then, if we can approximate $\res(B)$ within a factor of $\tilde{n}^{\alpha}$, we can compute in polynomial time an $x$ such that
$\res(B) \leq x \leq \res(B) \tilde{n}^{\alpha}$. Then $t\cdot \res(A)\leq x$ and $x\leq t(1+\res(A)) \tilde{n}^{\alpha}\leq 2t\res(A)\tilde{n}^{\alpha}$. Therefore, $\res(A) \leq \frac{x}{t} \leq 2\res(A)\tilde{n}^{\alpha}$, so $\frac{x}{t}$ approximates $\res(A)$ within a factor of
$2\tilde{n}^{\alpha}=O(n^\alpha)$.
\end{proof}

\subsection{PCP Theorems.}
We briefly introduce the notion of \textit{Probabilistically Checkable Proofs} (PCPs). For a comprehensive treatment refer to~\cite{Bellare98} or~\cite{AroraBarak}.

A polynomial-time probabilistic machine $V$ is called a $(p(n),r(n),q(n))$-PCP verifier for a language $L\subseteq \B^*$ if:
\begin{itemize}
\item for an input $x$ of length $n$, given random access to a ``proof'' $\pi \in \B^*$, $V$ uses at most $r(n)$ random bits, accesses at most $q(n)$ locations of $\pi$, and outputs $0$ or $1$ (meaning ``reject'' or ``accept'' respectively),
\item if $x\in L$ then there is a proof $\pi$, such that $\pr[V(x,\pi)=1]=1$,
\item if $x\notin L$ then for every proof $\pi$, $\pr[V(x,\pi)=1]\leq p(n)$.
\end{itemize}
We consider only \textit{nonadaptive} verifiers, meaning that the subsequently accessed locations
depend only on the input and the random bits, and not on the previous answers, hence
we can think that $V$ specifies at most $q(n)$ locations and then receives a sequence of
bits encoding all the answers.
$p(n)$ from the above definition is often called the \textit{soundness} or the \textit{error probability}. 
In some cases, also the \textit{proof length} is important. For a fixed input $x$ of length $n$ 
the proof length is the total number of distinct locations queried by $V$ over all possible $2^{r(n)}$ runs of $V$ (on different sequences of $r(n)$ random bits). The proof length is always at most
$q(n)\cdot 2^{r(n)}$, and such a bound is typically sufficient for applications, however in some cases we desire PCP-verifiers with smaller proof length.

The set of languages for which there exists a $(p,r,q)$-PCP verifier is denoted by $\PCP_{p}[r,q]$.
\begin{theorem}[PCP Theorem~\cite{AroraPV,AroraPCP98}]\label{thm:pcp}
$\NP=\PCP_{1/2}[O(\log n), O(1)]$.
\end{theorem}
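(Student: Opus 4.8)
The plan is to prove the two inclusions separately; only one of them is difficult.

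\textbf{The easy inclusion $\PCP_{1/2}[O(\log n), O(1)] \subseteq \NP$.} Let $L$ admit a $(1/2, c\log n, q)$-PCP verifier $V$ with $q = O(1)$. The point is that the proof length is at most $q(n)\cdot 2^{c\log n} = O(n^c)$, so for an input $x$ a proof $\pi$ can be written down explicitly in polynomial space. The $\NP$ machine guesses such a $\pi$ and then deterministically simulates $V$ on all $2^{c\log n} = \poly(n)$ random strings $\rho$ (reading the $O(1)$ bits of $\pi$ that $V$ queries on each $\rho$), accepting iff $V(x,\pi;\rho)=1$ for every $\rho$. By perfect completeness, if $x\in L$ there is a $\pi$ causing the machine to accept; by soundness, if $x\notin L$ then for every $\pi$ we have $\pr[V(x,\pi)=1]\le 1/2<1$, so $V$ rejects on at least one $\rho$ and the machine rejects. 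Hence $L\in\NP$.

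\textbf{The hard inclusion $\NP \subseteq \PCP_{1/2}[O(\log n), O(1)]$.} This is the content of the PCP theorem, and it suffices to construct such a verifier for one $\NP$-complete language (say circuit satisfiability). I would follow the proof-composition route. First, \emph{arithmetization}: encode an (algebraic form of an) $\NP$-complete instance by multivariate low-degree polynomials over a finite field of size $\poly\log n$; combined with the sum-check protocol and a low-degree test, this yields an ``outer'' verifier using $O(\log n)$ random bits but $\poly\log n$ queries, each returning a field element. Second, a verifier of a different flavour: encoding an $\NP$ witness by its Walsh--Hadamard code and testing linearity together with a quadratic-consistency check gives an ``inner'' verifier reading only $O(1)$ bits, at the cost of $\poly(n)$ random bits and an exponentially long proof. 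Third, the \emph{composition paradigm}: rather than read its $\poly\log n$ field elements directly, the outer verifier is handed, for each of its random strings, an inner-verifier proof that ``the outer verifier would accept those symbols'', and the composed verifier runs the inner verifier on it; since the inner verifier's randomness and proof length depend only on the outer verifier's local computation (of size $\poly\log n$), one can arrange the parameters so that after a constant number of composition rounds the query complexity is $O(1)$ and the randomness is still $O(\log n)$. A final constant number of independent repetitions lowers the soundness error below $1/2$. (Dinur's later gap-amplification argument gives an alternative route to the same statement.)

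\textbf{Main obstacle.} The delicate step is composition: a naive composition loses a constant factor of the soundness gap at each level, so one must either phrase the building blocks as \emph{robust} verifiers / PCPs of proximity that compose without such loss, or keep the number of composition rounds constant and track the constants by hand. The other technically demanding ingredient is the soundness analysis of the low-degree test (ensuring a table far from every low-degree polynomial is rejected with constant probability) together with keeping the field size and polynomial degrees polylogarithmic throughout the arithmetization; by comparison the Hadamard-based inner verifier and the easy inclusion above are routine.
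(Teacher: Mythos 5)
The paper does not prove this statement: Theorem~\ref{thm:pcp} is the PCP theorem of Arora--Safra and Arora--Lund--Motwani--Sudan--Szegedy, and the paper simply invokes it as a black box, citing~\cite{AroraPV,AroraPCP98}. So there is no ``paper's own proof'' to compare against; the authors' ``route'' is to take the theorem for granted, which is standard practice since a self-contained proof would be a monograph in itself. Your easy inclusion $\PCP_{1/2}[O(\log n),O(1)]\subseteq\NP$ is correct and complete as written: the proof is polynomially long, and an $\NP$ machine can guess it and exhaustively check all $\poly(n)$ random strings. Your hard direction is an accurate high-level outline of the original algebraic proof --- arithmetization with sum-check and low-degree testing to get an outer verifier with $O(\log n)$ randomness and polylogarithmic queries, the Walsh--Hadamard (linearity/quadratic-consistency) inner verifier, and a constant number of composition steps --- and you correctly flag the two genuinely hard points, namely making composition soundness-preserving (robust PCPs / PCPs of proximity) and the soundness analysis of the low-degree test. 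You also correctly note Dinur's gap amplification as an alternative route. All of this is consistent with the literature; just be aware that it is a sketch of one of the deepest theorems in complexity, not a proof, and in the context of this paper no proof was expected --- the cited sources are the authority. One small technical caveat in your composition paragraph: with a Hadamard-based inner verifier alone, the inner randomness for an outer decision of size $\poly\log n$ is $\poly\log n$, which need not be $O(\log n)$; the classical argument first composes the algebraic verifier with a scaled-down copy of itself before the final Hadamard composition precisely to control this, which is one more place where ``keep the number of rounds constant and track constants by hand'' hides real work.
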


\subsection{Constraint Satisfaction Problems.}
We consider \textit{Constraint Satisfaction Problems} (CSPs) over boolean variables. An instance of a general CSP over $N$ boolean variables $x_1, x_2, \ldots, x_N$ is a collection of $M$ boolean constraints $\phi=(C_1, C_2, \ldots, C_M)$, where a boolean constraint is just a function $C:\B ^N \to \B$. A boolean assignment $v:\B^N\to \B$ satisfies a constraint $C$ if $C(v)=1$,
and $\phi$ is satisfiable if there exists an assignment $v:\B^N\to \B$ such that $C_i(v)=1$ for all $i=1,2,\ldots, M$. We define $\val(\phi)$ to be the maximum fraction of constraints in $\phi$ which can be satisfied by a single assignment. In particular $\val(\phi) = 1$ iff $\phi$ is satisfiable.

We consider computational properties of CSPs. We are mainly interested in CSPs, where every $N$-variable
constraint has description of size $\poly(N)$ (as opposed to the naive representation using $2^N$ bits).
A natural class of such CSPs are CNF-formulas, where every constraint is a clause being a disjunction of
$N$ literals, thus described in $O(N)$ space. Another important class are qCSPs, where every constraint
\emph{depends} only on at most $q$ variables. Such a constraint can be described using $\poly(N, 2^q)$ 
space, which is polynomial whenever $q=O(\log N)$. Formally, we say that a clause $C$ depends
on variable $x_i$ if there exists an assignment $v\in \{0,1\}^N$ such that $C(v)$ changes after
modifying the value of $x_i$ and keeping the remaining variables intact.
We define $V_C$ to be the set of all such variables.
It is easy to see that $\phi(C)$ is determined as soon as we assign the values to all variables
in $V_C$.
Finally, the following class will be of interest to us.

\begin{definition}
Let $C$ be  an $N$-variable constraint and let $V_C$ be the set of variables on which $C$ depends.
Consider all $2^{|V_C|}$ assignments
$\B^{V_C} \to \B$. If only $K$ of such assignments satisfy $C$, we write $\fsat(C)\leq K$. $\fsat(\phi)\leq K$ if $\fsat(C)\leq K$ for every constraint $C$ in $\phi$.
\end{definition}

 According to the above definition, if $\phi$ is a qCSP instance then $\fsat(\phi) \leq 2^q$. 
A constraint $C$ such that $\fsat(C)\leq K$ can be described by its set $V_C$
and a list of at most $K$ assignments to the variables in $V_C$ satisfying $C$. Thus the description is polynomial in $N$ and $K$.
We will consider CSPs $\phi$ with $\fsat(\phi)\leq \poly(N)$ and always assume that they are represented as just
described.

\section{Simple Hardness Result}\label{sec:constant}

We start with a simple introductory result, which is that
for any fixed constant $\eps>0$,
it is $\NP$-hard to find for a given $n$-state synchronizing automaton $A$ a synchronizing word $w$
such that $|w|\leq (2-\eps)\cdot\res(A).$ The final goal is to prove a much strong result, but the basic construction
presented in this section is the core idea further developed in the subsequent sections. The construction
is not the simplest possible, nor the most efficient in the number of states of the resulting automaton, but
it provides good intuitions for the further proofs. For a simpler construction in this spirit see~\cite{BerlinkovConst}.

\begin{theorem}\label{thm:constant_easy}
For every constant $\eps>0$, $\SynAppx(\{0,1,2\}, 2-\eps)$ is not solvable
in polynomial time, unless $\PTIME=\NP$.
\end{theorem}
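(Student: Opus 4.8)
\emph{Approach.} We reduce from a gap version of constraint satisfaction supplied by the PCP theorem. By Theorem~\ref{thm:pcp} and sequential repetition of the verifier, for every constant $\delta>0$ it is $\NP$-hard to distinguish, given a qCSP instance $\phi$ with $q=O(1)$ (hence $\fsat(\phi)\le 2^q=O(1)$), whether $\val(\phi)=1$ or $\val(\phi)<\delta$. For the $(2-\eps)$ bound it actually suffices to use the weaker fact that deciding $\val(\phi)=1$ versus $\val(\phi)<1$ is $\NP$-hard (e.g.\ $\phi$ a $3$-CNF formula, so $\fsat(\phi)\le 7$), but we phrase everything in terms of CSPs so that the construction carries over to Sections~\ref{sec:power} and~\ref{sec:main}. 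Given $\phi$ over variables $x_1,\dots,x_N$ with constraints $C_1,\dots,C_M$, I would build in polynomial time an $n$-state automaton $A=A(\phi)$ over $\Sigma=\{0,1,2\}$, with $n=\poly(N,M)$, in which reading a word over $\{0,1\}$ is interpreted as writing down an assignment to $x_1,x_2,\dots$ one bit at a time, and the letter $2$ is a ``commit/restart'' symbol.

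\emph{The automaton.} It should consist of: (i) a ``tape'' component of length $\approx N$ that advances on every bit and records how much of the current assignment has been written; (ii) for each constraint $C_i$ a gadget of size $\poly(\fsat(C_i))$ that monitors the bits assigned to the variables of $V_{C_i}$ and, as soon as the partial assignment makes $C_i$ unsatisfiable, moves into a ``stuck'' branch that can only be cleared by traversing a fresh assignment; (iii) auxiliary states so that if \emph{all} constraint gadgets are simultaneously in their ``happy'' configuration, the letter $2$ (possibly followed by a short fixed suffix) collapses everything to a single state, whereas otherwise $2$ merely rewinds the tape so that a new assignment can be written. One tunes the constants so that: if $\phi$ is satisfiable, then writing a satisfying assignment and then committing resets $A$, giving $\res(A)\le L$ with $L:=N+O(1)$; and if $\phi$ is not satisfiable, then every word --- however it interleaves bits and $2$'s, in whatever order --- must in effect ``write a full assignment, discover a violated constraint, rewind, and write a full assignment again'', forcing $\res(A)\ge 2L-O(1)$.

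\emph{Finishing.} Suppose $\SynAppx(\{0,1,2\},2-\eps)$ were solvable in polynomial time. On input $A(\phi)$ it returns a reset word $w$ with $|w|\le(2-\eps)\,\res(A)$. If $\phi$ is satisfiable then $|w|\le(2-\eps)L$; if $\phi$ is not satisfiable then $|w|\ge\res(A)\ge 2L-O(1)$. Since $L=N+O(1)$ grows with the instance (small instances being decided by brute force), for all large enough instances $(2-\eps)L<2L-O(1)$, so comparing $|w|$ with the known threshold $(2-\eps)L$ decides whether $\phi$ is satisfiable. This solves an $\NP$-hard problem in polynomial time, whence $\PTIME=\NP$. (The statement is for $\{0,1,2\}$, which is the alphabet we used; Lemma~\ref{lemma:binarization} would in addition transfer it to the binary alphabet.)

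\emph{Main obstacle.} The crux is the soundness direction: showing that when $\phi$ is unsatisfiable there is \emph{no} reset word of length $<2L-O(1)$. An adversarial word need not be an honest assignment encoding --- it may use $2$ to abort a partial assignment, restart, interleave bits in odd orders, or exploit that different states read the tape at different offsets. Ruling this out requires exhibiting a potential function, a weighted count over the tape and constraint gadgets measuring ``remaining work to synchronize'', which equals $\approx 2L$ on the full state set when $\phi\notin\mathrm{SAT}$, equals $\approx L$ when $\phi\in\mathrm{SAT}$, drops by at most a bounded amount per input letter, and --- crucially --- cannot be decreased cheaply by the letter $2$ unless every gadget is already ``happy''. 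Engineering the gadgets so that this argument goes through while keeping the number of states polynomial (which is exactly why the per-constraint gadget must have size $\poly(\fsat(C_i))$, motivating CSPs of small $\fsat$) is where essentially all the work lies, and it is the component refined in the later sections.
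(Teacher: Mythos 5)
Your high-level plan matches the paper's: reduce from (3-)SAT, attach to each constraint a gadget that reads bits as an assignment, use $2$ as a ``reset'' letter, and arrange that $\res(A)\approx N$ when $\phi$ is satisfiable while $\res(A)\ge 2N$ otherwise, so that a $(2-\eps)$-approximation separates the cases. That much is correct and is indeed what the paper does (Section~\ref{sec:constant}).

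However, the soundness direction --- the only nontrivial part of the theorem --- is not proved; you state that ``essentially all the work lies'' there and defer it to a hypothetical potential function. That is exactly the gap. The paper closes it not with a potential-function argument but with two short, concrete lemmas about the specific tree-gadget it builds. First (Lemma~\ref{lemma:synchro_unsat2}) any occurrence of $2$ in a word synchronizing the roots $R$ can be eliminated without increasing length, because the action of each maximal binary block $v_i$ followed by $2$ on $R$ is reproduced by the prefix of $v_i$ of length the largest multiple of $N+1$ not exceeding $|v_i|$; so one may assume the reset word is purely binary. Second (Lemma~\ref{lemma:synchro_unsat}) a purely binary word shorter than $2N+2$ synchronizing $R$ to $\{s\}$ can be truncated to length exactly $N+1$ still doing so, since once a root returns to itself after $N+1$ letters it stays inside its tree under any further $\le N$ letters; and a length-$(N+1)$ synchronizing word's length-$N$ prefix is, by Corollary~\ref{corr:gadget_synchro}, a satisfying assignment. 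Both steps hinge on the precise structure of the tree-gadget (levels, the root self-loop behaviour after $N+1$ steps, the leaves routing to $s$ or back to the root); a separate ``tape'' component shared across gadgets, as in your sketch (i), would also have to be driven into the sink and would require a matching argument you haven't given. Until you supply a concrete gadget and a concrete proof that adversarial interleavings of $0,1,2$ cannot shortcut below $2N+2$, the proof is incomplete.

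One further nit: your appeal to $\fsat$-bounded CSPs and gap-PCP is more machinery than is needed here and does not by itself yield the $2-\eps$ threshold; the paper deliberately uses plain 3-SAT in Section~\ref{sec:constant} and reserves the PCP/free-bit tools for Sections~\ref{sec:power} and~\ref{sec:main}, where the gap in $\val(\phi)$ is actually exploited quantitatively.
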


\subsection{Idea.}
Fix $\eps>0$. We will reduce 3-SAT to our problem, that is, show that an algorithm solving $\SynAppx(\{0,1,2\}, 2-\eps)$ 
can be used to decide satisfiability of 3-CNF formulas.
This will stem from the following reduction. For a given $N$-variable 3-CNF formula $\phi$ consisting of $M$ clauses
we can build in polynomial time a synchronizing automaton $A_\phi$ such that:
\begin{enumerate}
\item if $\phi$ is satisfiable then $\res(A_\phi)\approx N$,
\item if $\phi$ is not satisfiable then $\res(A_\phi)\geq 2N$.
\end{enumerate}
This implies Theorem~\ref{thm:constant_easy}, since applying an $(2-\eps)$-approximation algorithm to $A_\phi$
allows us to find out whether $\phi$ is satisfiable or not.

\subsection{Construction.}
Let $\phi=C_1\wedge C_2 \wedge \ldots \wedge C_M$ be a 3-CNF formula with $N$ variables $x_1,x_2,\ldots,x_N$ and $M$ clauses. We want to build an 
automaton $A_\phi=(\{0,1,2\},Q,\delta)$ with properties as described above. $A_\phi$ consists of $M$ gadgets, one for each clause in $\phi$, and a 
single sink state $s$. All letters leave $s$ intact, that is,
$\delta(s,0)=\delta(s,1)=\delta(s,2)=s$. We describe now a gadget for a fixed clause $C$. 

The gadget built for a clause $C$ can be essentially seen as a tree with $8$ leaves. Each leaf corresponds to one of the 
assignments to 3 variables appearing in $C$. First we introduce the uncompressed version of the gadget. Take all possible $2^N$ 
assignments and form a full binary tree of height $N$. Every edge in the tree is directed from a parent to its child and has a 
label from $\{0,1\}$. Every assignment naturally corresponds to a leaf in the tree. We could potentially use such a tree as
the gadget, except that its size is exponential. We will fix this by merging isomorphic subtrees to obtain a tree of size linear
in $N$. 

Let us denote by $L_0, L_1,\ldots, L_N$ the vertices at levels $0, 1,\ldots, N$, respectively, so that $L_0=\{r\}$, where $r$ is
the root, and $L_N$ is the set of leaves. 

Suppose that the variable $x_k$ does not occur in $C$. Take any vertex $v\in L_{k-1}$ and denote the subtrees rooted at
its children by $T_0$ and $T_1$. It is easy to see that $T_0$ and $T_1$ are isomorphic and can be merged, so that
we have two edges outgoing from $v$, labeled by $0$ and $1$, respectively, and both leading to the same vertex $v'$, which is
the root of $T_0$. We continue the merging until there are no more such vertices, which can be seen as
``compressing'' the tree. 

Let us now formalize the construction. The set of vertices at level $j$ (where $0\leq j\leq N$) is $L_j=\{q_{j}^w: w\in \{0,1\}^d\}$, where $d$ is the number of variables $x_i$ occurring in $C$ with $i\leq j$. For example $L_0$ is simply $\{ q_{0}^{\eps} \}$. Given $q_j^w$, one should think of $w\in \{0,1\}^d$ as some boolean assignment to variables $x_{j_1}, x_{j_2},\ldots, x_{j_d}$ appearing in $C$ (where $j_1\leq  j_2\leq  \ldots \leq j_d \leq j$). Let us now describe the edges. Take any variable $x_k$ and a vertex $q_{k-1}^w\in L_{k-1}$, then:
\begin{itemize}
\item if $x_k$ occurs in $C$, then $q_{k-1}^w$ has two distinct children $\delta(q_{k-1}^{w},0)=q_k^{w0}$ and $\delta(q_{k-1}^{w},1)=q_k^{w1}$,
\item if $x_k$ does not occur in $C$, then $q_{k-1}^w$ has one child $\delta(q_{k-1}^{w},0)=\delta(q_{k-1}^{w},1)=q_k^{w}$.
\end{itemize} 
We have already defined the edges outgoing from levels $0,1,\ldots, N-1$. This justifies the name
tree-gadget. It remains to define level $N$, where intuitively the ``synchronization'' or ``rejection'' happens. Let $q_N^w$ be a vertex on the last level, then:
\begin{itemize}
\item if $w$ corresponds to a satisfying assignment of $C$ then $\delta(q_N^{w},0)=\delta(q_N^{w},1)=s$,
\item otherwise $\delta(q_N^{w},0)=\delta(q_N^{w},1)=q_0^\eps.$
\end{itemize}

The above defined \textit{tree-gadget} will be further denoted by $T_C$, and its root $q_0^\eps$ will be usually referred
to as $r$. To complete the definition, we set $\delta(q,2)=r$ for every $q\in T_C$.
See Figure~\ref{figure:gadget_advanced} for an example.

\begin{figure}[t]
\centering
\includegraphics[width=\textwidth]{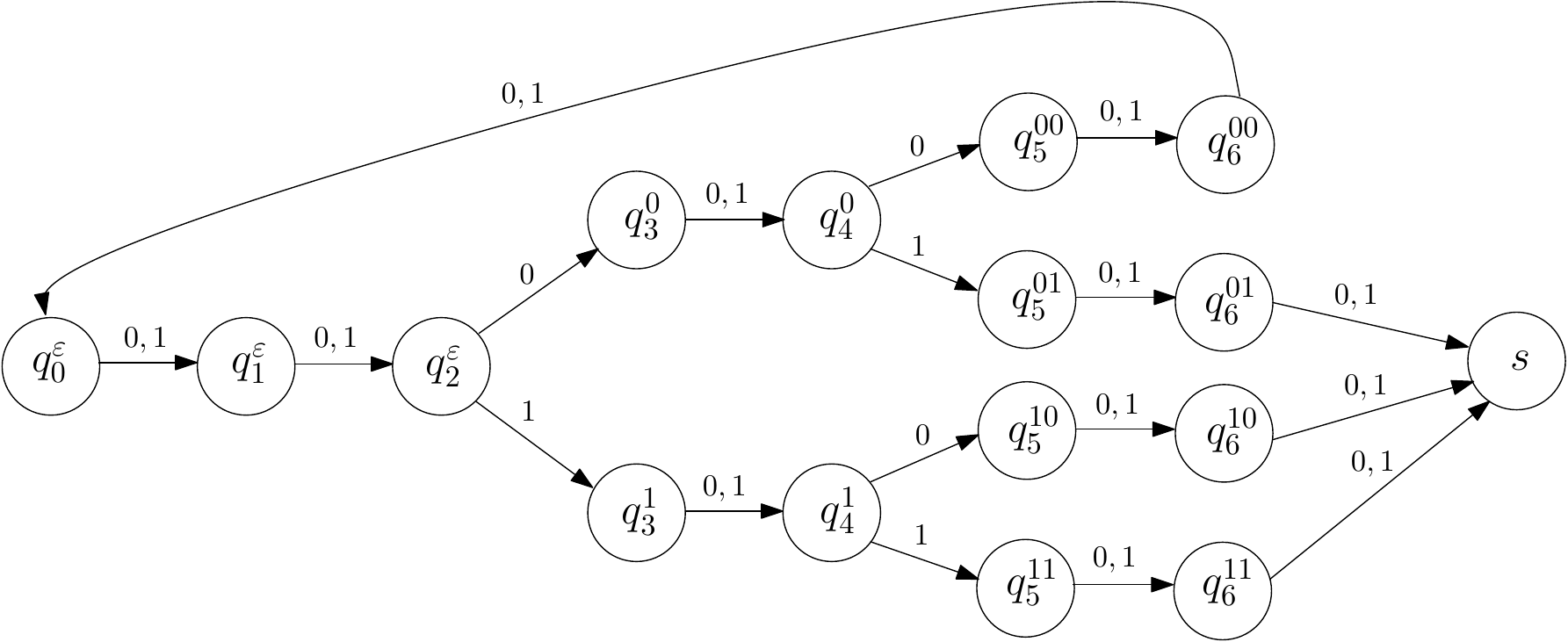}
\caption{Tree-gadget $T_C$ constructed for a clause $C=x_3\vee x_5$ and $N=6$.}
\label{figure:gadget_advanced}
\end{figure}

The automaton $A_\phi$ consists of $M$ disjoint tree-gadgets $T_{C_1}, T_{C_2}, \ldots, T_{C_M}$ and a single ``sink state'' 
$s$. Formally, its set of states is $Q=\sum_{i=1}^M T_{C_i} \cup \{s\}$ and the transition $\delta$ is defined 
above for every tree-gadget and the sink state $s$.

\subsection{Properties of $A_\phi.$} The following properties of $A_\phi$ can be established.

\begin{proposition}
\label{prop:assign}
Consider a tree-gadget $T_C$ with root $r$ constructed for a clause $C$.
If $C$ depends only on variables $x_{j_1}, x_{j_2}, x_{j_3}$ then for any
binary assignment $v\in \B^N$ we have
$\delta(r,v) = q_N^{w}$, where $w=v_{j_1}v_{j_2}v_{j_3}$.
\end{proposition}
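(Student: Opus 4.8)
The plan is to prove, by induction on $j\in\{0,1,\ldots,N\}$, a slightly more refined statement: reading the length-$j$ prefix $v_1v_2\cdots v_j$ of $v$ from the root $r$ lands in a specific vertex of the level $L_j$. Concretely, for each $j$ let $w^{(j)}\in\{0,1\}^{d_j}$ be the string obtained by listing $v_{j_i}$ for exactly those indices $j_i\in\{j_1,j_2,j_3\}$ with $j_i\le j$, in increasing order of $j_i$; here $d_j$ is the number of variables of $C$ with index at most $j$, which is exactly the parameter $d$ appearing in the definition of $L_j$. The claim to be proved by induction is $\delta(r,v_1\cdots v_j)=q_j^{w^{(j)}}$.

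For the base case $j=0$ we have $\delta(r,\eps)=r=q_0^\eps$ and $w^{(0)}=\eps$, so the claim holds. For the inductive step, assume $\delta(r,v_1\cdots v_{k-1})=q_{k-1}^{w^{(k-1)}}$ and apply the transition on the letter $v_k\in\{0,1\}$, splitting into the two cases of the construction. If $x_k$ occurs in $C$, then by definition $\delta(q_{k-1}^{w^{(k-1)}},v_k)=q_k^{w^{(k-1)}v_k}$, and since $k$ is now among the listed indices, $w^{(k)}=w^{(k-1)}v_k$. If $x_k$ does not occur in $C$, then $\delta(q_{k-1}^{w^{(k-1)}},v_k)=q_k^{w^{(k-1)}}$ regardless of $v_k$, and since $k$ contributes nothing to the list, $w^{(k)}=w^{(k-1)}$. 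In both cases $\delta(r,v_1\cdots v_k)=q_k^{w^{(k)}}$, completing the induction.

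Instantiating at $j=N$ yields $\delta(r,v)=\delta(r,v_1\cdots v_N)=q_N^{w^{(N)}}$; since all three of $j_1,j_2,j_3$ are at most $N$, the string $w^{(N)}$ is precisely $v_{j_1}v_{j_2}v_{j_3}=w$, which is the asserted conclusion. The only point requiring care is the bookkeeping of indices — keeping the definition of $w^{(j)}$ synchronized with the superscript labels $q_j^w$ used in the construction — but there is no genuine obstacle here: the statement is essentially an unwinding of the ``compressed binary tree'' definition of $T_C$, using the observation that a variable not occurring in $C$ collapses both outgoing edges of a level-vertex onto a single child, so reading $v_k$ at such a level has no effect on the attained vertex.
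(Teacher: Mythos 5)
Your proof is correct and is essentially the same argument as the paper's, only spelled out as a formal induction: the paper states the inductive invariant informally ("whenever we meet a relevant variable, we concatenate the assigned bit to our memory") and you have simply unwound it level by level. No gap; the extra bookkeeping you note is indeed the only thing worth double-checking.
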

\begin{proof}
This follows immediately from the construction of $T_C$. We start in state $q_0^\eps$. Whenever we meet a relevant variable, we concatenate the assigned bit to our ``memory''. Thus after reading the whole assignment, we end up with the $3$ relevant bits. 
\end{proof}

Proposition~\ref{prop:assign} immediately yields the following.

\begin{corollary}\label{corr:gadget_synchro}
Consider a tree-gadget $T_C$ constructed for a clause $C$, and
let $w=vc$ be a binary word with $|v|=N$ and $c\in \{0,1\}$. If $v$ is an 
assignment satisfying $C$ then $\delta(r,w)=s$, otherwise $\delta(r,w)=r$.
\end{corollary}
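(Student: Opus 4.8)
The plan is to read this off directly from Proposition~\ref{prop:assign} together with the definition of the transitions out of level $N$ in the gadget $T_C$. First I would apply Proposition~\ref{prop:assign} to the length-$N$ prefix $v$ of $w$: writing $C$ as depending on the variables $x_{j_1},x_{j_2},x_{j_3}$, reading $v$ starting from the root $r$ lands us in the state $\delta(r,v)=q_N^{u}$, where $u=v_{j_1}v_{j_2}v_{j_3}$ is exactly the restriction of the assignment $v$ to the three variables relevant to $C$. Next I would observe that, since $C$ depends only on $x_{j_1},x_{j_2},x_{j_3}$, whether $v$ satisfies $C$ is determined by $u$ alone (this is precisely the remark from the preliminaries that $\phi(C)$ is fixed as soon as the variables in $V_C$ are assigned); so $v$ satisfies $C$ if and only if $u$ ``corresponds to a satisfying assignment of $C$'' in the sense used when defining the transitions at level $N$.

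Then I would append the last letter $c\in\{0,1\}$. By the definition of $\delta$ on level-$N$ vertices, $\delta(q_N^{u},0)=\delta(q_N^{u},1)=s$ when $u$ is satisfying, and $\delta(q_N^{u},0)=\delta(q_N^{u},1)=q_0^\eps=r$ otherwise. Chaining the two steps, $\delta(r,w)=\delta\bigl(\delta(r,v),c\bigr)=\delta(q_N^{u},c)$, which equals $s$ in the first case and $r$ in the second, giving the claim. (If a clause happens to depend on fewer than three variables the identical argument goes through with $u$ of the appropriate shorter length.)

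There is essentially no obstacle here: the statement is an immediate corollary of Proposition~\ref{prop:assign}, and the only point worth stating explicitly is the passage from ``the full assignment $v$ satisfies $C$'' to ``its $3$-bit restriction $u$ satisfies $C$'', which is just unpacking what it means for $C$ to depend only on $x_{j_1},x_{j_2},x_{j_3}$.
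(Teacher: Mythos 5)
Your proof is correct and matches the paper's intent exactly: the paper gives no explicit proof, merely stating that the corollary ``immediately'' follows from Proposition~\ref{prop:assign}, and your argument simply fills in that routine step by combining the proposition with the definition of the level-$N$ transitions.
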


Since $s$ is a sink state, synchronizing $A_\phi$ is equivalent to pushing all
of its states into $s$. Actually, it is enough to consider how to synchronize
the set $R=\{r_1,r_2,\ldots,r_M\}$, where $r_i$ is the root of the $i$-th tree-gadget
$T_{C_{i}}$. This is because $\delta(T_{C_i},2)=\{r_i\}$ for every $i$, hence one application
of letter $2$ ``synchronizes'' every gadget to its root and then it is enough to
synchronize the roots.

It is already easy to see that $A_\phi$ is always synchronizing, because we can synchronize gadgets one by one. The 
following lemma says that in case when $\phi$ is satisfiable, we can synchronize $A_\phi$ very quickly.

\begin{lemma}
\label{lemma:synchro_sat}
If $\phi$ is satisfiable and $v\in \{0,1\}^N$ is a satisfying assignment, then the word $w=2v0$ synchronizes $A_\phi$. Therefore $\res(A_\phi)\leq N+2$. 
\end{lemma}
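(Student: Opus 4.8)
The plan is to read the word $w = 2v0$ one letter at a time, tracking the image of $Q$ under the corresponding prefix. The whole argument is a direct application of the facts already established about the gadgets; there is essentially no obstacle beyond observing that a single satisfying assignment works for every clause at once.

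First I would handle the leading letter $2$. By construction $\delta(q,2) = r_i$ for every state $q$ of the tree-gadget $T_{C_i}$, and $\delta(s,2) = s$. Hence $\delta(Q,2) = \{r_1, r_2, \ldots, r_M\} \cup \{s\}$, i.e.\ after one step each gadget has been collapsed to its root, together with the sink. This is exactly the reduction already noted before the lemma: it suffices to see how $v0$ acts on the set of roots.

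Next I would apply Corollary~\ref{corr:gadget_synchro} to each gadget with the word $v0$ (so in the notation of the corollary, the length-$N$ prefix is $v$ and the trailing bit is $c = 0$). Since $v$ satisfies $\phi$, it satisfies every clause $C_i$; therefore $\delta(r_i, v0) = s$ for all $i$. As $s$ is a sink state, $\delta(s, v0) = s$ as well. Combining this with the first step,
\[
\delta(Q, 2v0) = \delta\bigl(\{r_1,\ldots,r_M,s\},\, v0\bigr) = \{s\},
\]
so $w = 2v0$ is a synchronizing word. Since $|w| = N + 2$, we conclude $\res(A_\phi) \le N+2$, which finishes the proof. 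The only point worth emphasizing is that it is precisely the \emph{satisfiability} of $\phi$ that lets one common suffix $v0$ drive all $M$ roots simultaneously into $s$; an unsatisfiable $\phi$ will later be shown to force a longer word, giving the gap needed for the inapproximability claim.
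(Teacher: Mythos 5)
Your proof is correct and follows essentially the same route as the paper's: apply the letter $2$ to collapse every gadget to its root, then invoke Corollary~\ref{corr:gadget_synchro} with the word $v0$ to send each root to the sink $s$. The extra remarks (noting $\delta(s,v0)=s$ and that satisfiability of $\phi$ is what makes a common suffix work) are accurate but do not change the argument.
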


\begin{proof}
Applying $2$ to the states $Q$ of $A_\phi$ yields $\delta(Q,2)=R\cup \{s\}$. Then, for every $r_i\in R$, since $C_i$ is satisfiable by $v$, we have by Corollary~\ref{corr:gadget_synchro} $\delta(r_i, v0)=s$. Hence $\delta(Q,w)=\{s\}$.
\end{proof}

Our next goal is to show that whenever $\phi$ is not satisfiable, $A_\phi$ cannot by synchronized quickly. To this end, we prove the following.

\begin{lemma}
\label{lemma:synchro_unsat}
Suppose there exists a binary word of length less than $2N+2$ synchronizing $R$ to $\{s\}$. Then $\phi$ is satisfiable.
\end{lemma}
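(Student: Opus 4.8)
The plan is to prove the contrapositive-flavoured statement directly: assume $w$ is a binary word of length less than $2N+2$ (so $|w| \le 2N+1$) synchronizing $R = \{r_1, \dots, r_M\}$ to $\{s\}$, and extract from $w$ a satisfying assignment of $\phi$. The key structural observation from Corollary~\ref{corr:gadget_synchro} is that reading a block of the form $vc$ with $|v|=N$, $c\in\{0,1\}$ starting at a root $r_i$ either reaches the sink $s$ (if $v$ satisfies $C_i$) or returns to the very same root $r_i$; more generally, reading any binary word of length $\le N$ starting at $r_i$ keeps us strictly inside the tree-gadget $T_{C_i}$ (never reaching $s$, since the only edges into $s$ leave level $N$), and reading a word of length $\le N+1$ either lands in $s$ or lands back at $r_i$. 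I would make this precise first as a small claim: for any $r_i$ and any binary word $u$ with $|u| \le N$, $\delta(r_i, u)$ is a non-sink, non-root vertex of $T_{C_i}$ when $0 < |u| < N$... actually more useful is the following phrasing. Since $|w| \le 2N+1 < 2(N+1)$, the word $w$ can contain "enough letters to fully traverse a gadget" only limited number of times, which is the crux.

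Concretely, here is the extraction. Write $w = w_1 w_2 \cdots w_m$ with $m = |w| \le 2N+1$. Consider the first gadget $T_{C_1}$ with root $r_1$. Because $\delta(r_1, u)$ stays inside $T_{C_1}$ (away from $s$) for $|u| \le N$, and because the only way to leave $T_{C_1}$ is to pass through $s$ (a dead state), the trajectory of $r_1$ under $w$ must at some point read a length-$(N+1)$ block that sends it to $s$: more precisely, there is a prefix $w_1 \cdots w_k$ with $k$ minimal such that $\delta(r_1, w_1\cdots w_k) = s$. By Corollary~\ref{corr:gadget_synchro} (applied iteratively: each time we return to a root we must restart a fresh $N+1$-length attempt) this forces $k \ge N+1$, and in fact $w_{k-N} w_{k-N+1} \cdots w_{k-1}$ is an assignment $v \in \{0,1\}^N$ satisfying $C_1$, with $w_k$ being the trailing bit $c$ — here I am using that the last full traversal before hitting $s$ must start at a root $r_1$, which it does because all the edges inside $T_{C_1}$ that do not lead to $s$ eventually cycle back to $r_1$ only through level $N$. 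I would spell out this "last block" argument carefully; this is the step I expect to be the main obstacle, because one has to rule out pathological trajectories that wander partway down the tree, bounce back to the root mid-word, and only later commit — and show that whenever we do reach $s$, the $N$ bits immediately preceding the final letter constitute a genuine satisfying assignment of $C_1$.

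The punchline is that the same $N$-bit window $v = w_{k-N}\cdots w_{k-1}$ must work for every gadget simultaneously. Here I use the length budget: $|w| \le 2N+1$. Suppose for contradiction that $v$ does not satisfy some clause $C_j$. Then, tracking $r_j$ under $w$: up to position $k-N-1$ we have read fewer than $N$ letters that could have possibly synchronized $T_{C_j}$ — actually the clean way is to argue about positions. Since reaching $s$ from $r_j$ requires reading a block $v'c'$ with $v'$ satisfying $C_j$, and each failed attempt of length $N+1$ returns to $r_j$, the trajectory of $r_j$ partitions $w$ (greedily, reading $N+1$ letters at a time from each root-visit) into attempts; for $r_j$ to reach $s$ within $w$, one of these length-$(N+1)$ attempts, say occupying positions $[a+1, a+N+1]$, must have its first $N$ letters $w_{a+1}\cdots w_{a+N}$ satisfy $C_j$. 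With $|w| \le 2N+1$, there is room for at most one such complete attempt starting at a position $a \ge 0$ with $a + N + 1 \le |w| \le 2N+1$, forcing $a \le N$; combined with the first gadget analysis one shows all gadgets are forced to synchronize using essentially the same window, hence the same $v$ satisfies all of $C_1, \dots, C_M$, i.e. $\phi$ is satisfiable. I would present this counting as: the positions where "a satisfying $N$-block followed by a letter" can occur are squeezed into an interval of length $\le 2N+1$, so the blocks for different gadgets must overlap enough to coincide; the details are a short but slightly delicate bookkeeping that I would write out in full in the actual proof.

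Combining Lemma~\ref{lemma:synchro_sat} and Lemma~\ref{lemma:synchro_unsat}: if $\phi$ is satisfiable then $\res(A_\phi) \le N+2$, and if $\phi$ is unsatisfiable then $\res(A_\phi) \ge 2N+2$ (since synchronizing $A_\phi$ in particular synchronizes $R$ to $\{s\}$). Taking $N$ large enough that $\frac{2N+2}{N+2} > 2-\eps$, a polynomial-time $(2-\eps)$-approximation algorithm run on $A_\phi$ would distinguish the two cases and hence decide $3$-SAT, proving Theorem~\ref{thm:constant_easy}. The number of states of $A_\phi$ is $O(MN)$, polynomial in the input size, so the reduction is legitimate.
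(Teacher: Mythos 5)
Your proposal is correct and takes essentially the same approach as the paper's: both exploit the period-$(N+1)$ structure of gadget trajectories (each $N+1$-letter block from a root lands either at $s$ or back at that root) to conclude that with fewer than $2N+2$ letters the first $N+1$-letter block must already send every root to $s$, so $w_1\cdots w_N$ satisfies every clause — the paper just phrases your ``only one attempt fits'' counting more tersely as truncating $w$ to its prefix of length $N+1$. Your worry about ``pathological trajectories that wander partway down the tree, bounce back to the root mid-word, and only later commit'' is a non-issue: the gadget is level-monotone (the level increases by one on each letter until level $N$), so the only way back to the root is after reading exactly $N+1$ letters, which is precisely the fact Corollary~\ref{corr:gadget_synchro} encodes.
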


\begin{proof}
First note that if for some tree-gadget $T_C$ with root $r$ and a word $v\in \B^{N+1}$ we have $\delta(r,v)=r$, then for every binary word $v'$ of length $d\leq N$ we have $\delta(r,vv')\in L_d$ (where $L_d$ is the $d$-th level of the tree-gadget), in particular $vv'$ does not push $r$ to $s$. This implies that if there is a word $w$ of length less than $2N+2$, synchronizing $A_\phi$, then there exists a word of length $N+1$ synchronizing $A_\phi$, because we can simply truncate it after the $(N+1)$-th letter. By Corollary~\ref{corr:gadget_synchro}, such a synchronizing word of length $N+1$ has then a prefix of length $N$, which is a satisfying assignment for $\phi$.
\end{proof}

By Lemma~\ref{lemma:synchro_unsat}, if there is no short binary word synchronizing $R$ then $\phi$
is not satisfiable. Using letter $2$ does not help at all in synchronizing $R$ as shown below.

\begin{lemma}
\label{lemma:synchro_unsat2}
Suppose there exists a word $w\in \{0,1,2\}^*$ synchronizing $R$ to $\{s\}$. Then there is a word $w'\in \B^*$ of length at most $|w|$ synchronizing $R$ to $\{s\}$. 
\end{lemma}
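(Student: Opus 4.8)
The plan is to show that any occurrence of the letter $2$ inside a synchronizing word for $R$ can be eliminated without increasing the length, by exploiting the fact that $2$ collapses every gadget to its own root. First I would take a word $w\in\{0,1,2\}^*$ with $\delta(R,w)=\{s\}$ and look at the \emph{last} occurrence of the letter $2$ in $w$, writing $w = u\,2\,z$ where $z\in\{0,1\}^*$. The key observation is that $\delta(r_i, u2) = r_i$ for every $i$, since by construction $\delta(q,2)=r_i$ for every state $q\in T_{C_i}$, regardless of what $u$ did. Hence $\delta(R, u2z) = \delta(R, z)$, so the suffix $z\in\{0,1\}^*$ already synchronizes $R$ to $\{s\}$, and $|z|\le |w|$. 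This immediately gives the conclusion with $w' = z$.

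Actually I expect the clean way to phrase the argument is not even to induct but to observe directly: if $w$ contains at least one $2$, let $z$ be the suffix of $w$ after its last $2$; then $z\in\{0,1\}^*$, $|z|<|w|\le|w|$, and $\delta(R,z)=\delta(R,w)=\{s\}$ by the collapsing property above, so $w'=z$ works. If $w$ contains no $2$ at all, then $w$ itself is already in $\{0,1\}^*$ and we take $w'=w$. Either way we produce the desired $w'\in\B^*$ of length at most $|w|$ synchronizing $R$ to $\{s\}$.

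The only point that needs a small amount of care — and the closest thing to an obstacle — is justifying that $\delta(r_i,u2)=r_i$ cleanly: one must note that after reading $u$ (whatever letters it contains, including further $2$'s), the image $\delta(r_i,u)$ is \emph{some} state of the gadget $T_{C_i}$ — here we use that the $0,1$-transitions never leave the gadget (they either stay inside the tree, return to the root, or go to $s$, but starting from $r_i$ with a word that still has a $2$ ahead we have not yet been forced out), and crucially that once we hit $s$ we stay there, which is fine since $s$ is absorbed into $\{s\}$ anyway. In fact the argument is robust: whatever $\delta(r_i,u2)$ equals — either $r_i$ (if $\delta(r_i,u)$ was a gadget state) or $s$ (if it had already reached $s$, since $\delta(s,2)=s$) — in both cases it lies in $R\cup\{s\}$, and reading $z\in\{0,1\}^*$ from $R\cup\{s\}$ gives the same thing as reading it from $R$ together with the already-synchronized $s$. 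So $\delta(R,z)=\{s\}$ follows from $\delta(R\cup\{s\},z)=\delta(R,w)=\{s\}$. This completes the proof, and combining Lemmas~\ref{lemma:synchro_unsat} and~\ref{lemma:synchro_unsat2} with Lemma~\ref{lemma:synchro_sat} yields the two-sided bound (items 1 and 2 in the Idea), hence Theorem~\ref{thm:constant_easy}.
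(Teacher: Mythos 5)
Your approach has a genuine gap. You write
\[
\delta(R,z)=\{s\}\ \text{follows from}\ \delta(R\cup\{s\},z)=\delta(R,w)=\{s\},
\]
but the middle equality is false in general. After reading the prefix $u$, some roots may have \emph{already} reached the sink, so that $\delta(R,u2)$ is a \emph{proper} subset of $R\cup\{s\}$: for each $i$ we have $\delta(r_i,u2)\in\{r_i,s\}$, and it equals $s$ exactly when $\delta(r_i,u)=s$. Hence $\delta(R,w)=\delta(\delta(R,u2),z)$, and if $\delta(r_i,u)=s$, the hypothesis $\delta(R,w)=\{s\}$ tells you nothing about $\delta(r_i,z)$ --- that root's journey to $s$ happened entirely inside $u$, which you have deleted. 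Concretely: take two gadgets, a word $w = v_1c_1\,2\,v_2c_2$ with $|v_1|=|v_2|=N$, where $v_1$ satisfies $C_1$ but not $C_2$ and $v_2$ satisfies $C_2$ but not $C_1$. Then $w$ synchronizes $R$ to $\{s\}$, but the suffix $z=v_2c_2$ after the last $2$ leaves $r_1$ at $r_1$, so $\delta(R,z)=\{r_1,s\}\neq\{s\}$.

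The paper's proof avoids this by never discarding information from earlier $2$-free blocks. It decomposes $w=v_1 2 v_2 2\cdots v_k 2$ and replaces each block $v_i2$ by the longest prefix of $v_i$ whose length is a multiple of $N+1$; this prefix has the same action on $R$ as $v_i2$ (it carries each root around full $(N{+}1)$-cycles, possibly landing in $s$, and the final $2$ undoes any partial cycle). Concatenating these replacements yields a binary word $w'$ with the same action on $R$ as $w$ and $|w'|\le|w|$. To fix your argument you would need to keep, for each block $v_i2$, the binary part that actually synchronized some roots --- which is essentially this splicing argument, not a single suffix extraction.
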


\begin{proof}
For convenience assume $w$ ends with a $2$. Decompose $w$ as follows $w=v_12v_22\ldots v_k2$, where $v_1, v_2,\ldots, v_k\in \B^*$. Fix one part of the form $v_i2$. If $|v_i|\leq N$, then for every $r\in R$, the empty word $\eps$ acts exactly the same on $r$ as $v_i 2$. Hence we can replace the part $v_i2$ by $\eps$ in w. More generally if $v_i$ is of the form $v_i=vv'$, where $v$ has length being a multiple of $N+1$ and $|v'|\leq N$, then the words $v_i2$ and $v$ are again equivalent with respect to action on $R$. Thus every part $v_i2$ can be replaced by a shorter binary word yielding a word $w'$ with the same action on $R$ as $w$. The lemma follows. 
\end{proof}

We are now ready to prove Theorem~\ref{thm:constant_easy}.

\begin{proof}[of Theorem~\ref{thm:constant_easy}]
Fix $\eps>0$ and suppose we can solve $\SynAppx(\{0,1,2\},2-\eps)$ in polynomial time.
We will show that we can solve 3-SAT in polynomial time. Let $\phi$ be any 3-CNF formula. We construct $A_\phi$ and approximate its shortest reset word within a factor of $2-\eps$.
By Lemma~\ref{lemma:synchro_sat} if $\phi$ is satisfiable then $\res(A_\phi)\leq N+2$, and if $\phi$ is not satisfiable then by Lemma~\ref{lemma:synchro_unsat} and Lemma~\ref{lemma:synchro_unsat2} we have $\res(A_\phi)\geq 2N+2$. Hence, $(2-\eps)$-approximation allows us to distinguish between those two cases in polynomial time.
\end{proof}

\section{Hardness with Ratio $n^{\eps}$}\label{sec:power}
In this section we show that it is possible to achieve a stronger hardness result using essentially the same reduction, but 
from a different problem. The problem we reduce from is CSP with some specific parameters. 
Its hardness is proved by suitably amplifying the error probability in the classical PCP theorem. We provide
the details below.

\subsection{PCP, qCSP and Probability Amplification.}
We want to obtain a hard boolean satisfaction problem, which allows us to perform more efficient reductions to  
$\SynAppx$. The usual source of such problems are PCP theorems, the most basic one asserting that
$\NP=\PCP_{1/2}[O(\log n), O(1)]$. By sequential repetition we can obtain verifiers erring with much lower probability, i.e.,
$\NP=\PCP_\eps[O(\log n),O(1)]$ for any fixed $\eps\in (0,1)$. Combining such a verifier with the construction of
$A_\phi$ described in the previous section yields that it is $\NP$-hard to approximate the shortest reset word within a factor
of $\alpha$, for any constant $\alpha$.
However, we aim for a stronger $n^{\eps}$-hardness for some $\eps>0$.
To this end we need to construct PCP verifiers with subconstant error.

Sequential repetition used to reduce the error probability as explained above has severe limitations.
We want the error probability to be $\approx n^{-1}$. This requires $\Theta(\log n)$ repetitions,
each consuming  fresh $O(\log n)$ random bits, and results in a verifier with the error probability
bounded by $n^{-1}$ using $O(\log n)$ queries.
The total number of used random bits is then $r=\Theta(\log^2 n)$, which is too much,
since the size of the automaton polynomially depends on $2^r$. 
Fortunately, the amount of used random bits can be reduced using the standard
idea of a random walk on an expander, resulting in the following theorem.
We explain the details and deduce the following theorem in Appendix~\ref{sec:expanders}
from Theorem~\ref{thm:pcp}.

\begin{restatable}[Subconstant Error PCP]{theorem}{restatablesubconst}
\label{thm:subconst_pcp}
$\NP \subseteq \PCP_{1/n}[O(\log n), O(\log n)]$.
\end{restatable}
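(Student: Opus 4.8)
The plan is to derive Theorem~\ref{thm:subconst_pcp} from the basic PCP Theorem~\ref{thm:pcp} by the standard random-walk-on-an-expander trick for randomness-efficient error reduction. We begin with a $(1/2, O(\log n), O(1))$-PCP verifier $V$ for an arbitrary language $L\in\NP$, guaranteed by Theorem~\ref{thm:pcp}. Naive sequential repetition of $V$ a total of $k=\Theta(\log n)$ times drives the error down to $2^{-k}\leq 1/n$ while using only $O(\log n)$ queries, but it consumes $r=\Theta(\log^2 n)$ random bits, which is too many for our application (the automaton size scales with $2^r$). The fix is to correlate the random strings of the repetitions via a random walk.

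Concretely, let $R=2^{r(n)}=\poly(n)$ be the number of possible random strings of $V$, and identify these strings with the vertices of an explicit constant-degree expander graph $G$ on $R$ vertices with second-largest (normalised) eigenvalue $\lambda<1$ bounded away from $1$ (such graphs, e.g. from the standard constructions, are computable in time $\poly(R)$). The new verifier $V'$ picks a uniformly random start vertex and takes $k-1$ further random steps of the walk, obtaining a sequence $(\rho_1,\dots,\rho_k)$ of random strings for $V$; it then runs $V$ with each $\rho_i$ and accepts iff all $k$ runs accept. The randomness used is $\log R + (k-1)\cdot O(1) = O(\log n) + O(\log n)= O(\log n)$, since each step of a constant-degree walk costs only $O(1)$ bits. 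The query complexity is $k\cdot O(1)=O(\log n)$, and completeness is preserved verbatim: if $x\in L$, the proof that makes $V$ always accept makes all $k$ runs accept. For soundness, fix $x\notin L$ and any proof $\pi$; let $B\subseteq V(G)$ be the set of random strings on which $V(x,\pi)$ accepts, so $|B|/R\le 1/2$ by the soundness of $V$. The expander hitting/mixing lemma (the Ajtai--Komlós--Szemerédi bound, as in Section~2 of~\cite{AroraBarak}) gives that the probability a length-$k$ walk stays entirely inside $B$ is at most $(\,|B|/R + \lambda\,)^{\,k-1}\le (1/2+\lambda)^{k-1}$; choosing $\lambda$ small enough (a fixed constant less than $1/2$, attainable by taking a fixed power of the base expander) this is $\le \gamma^{k}$ for some constant $\gamma<1$, so picking $k=\lceil \log_{1/\gamma} n\rceil=\Theta(\log n)$ yields error $\le 1/n$. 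Hence $V'$ witnesses $L\in\PCP_{1/n}[O(\log n),O(\log n)]$, and since $L\in\NP$ was arbitrary, $\NP\subseteq\PCP_{1/n}[O(\log n),O(\log n)]$.

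There are a couple of bookkeeping points to verify carefully. First, one must check the constant-degree explicit expander is genuinely computable within the time/randomness budget: neighbour queries in $G$ must be answerable in $\poly(\log R)=\poly(\log n)$ time so that $V'$ remains polynomial-time, and $G$ on $R=\poly(n)$ vertices must exist for every relevant $R$ (handled by padding to the nearest convenient size, or by using a family like the $\poly$-time constructions that work for all sizes up to rounding). Second, $V'$ inherits nonadaptivity from $V$ — the sequence of walk vertices is determined by the random bits alone, and each copy of $V$ is nonadaptive, so the whole query set of $V'$ is fixed given the randomness. Third, the soundness argument needs the one-sided (hitting-set) version of the expander mixing lemma rather than the two-sided statistical version; this is exactly the AKS lemma and is what makes the $(|B|/R+\lambda)$ factor appear.

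The main obstacle, and the only place where the argument needs real care rather than citation, is getting the constants to line up: we need the expander's spectral gap to be good enough that the per-step shrinkage factor $|B|/R + \lambda$ is a constant strictly below $1$ when $|B|/R$ is as large as the soundness $1/2$ allows. Off-the-shelf constant-degree expanders have $\lambda$ bounded away from $1$ but not necessarily below $1/2$; the standard remedy is to first take a constant power $G^{t}$ of the base graph (still constant degree, $\lambda\mapsto\lambda^{t}$), which pushes $\lambda$ below any desired threshold at the cost of a constant blow-up in degree and hence a constant factor more random bits per step — harmless for an $O(\log n)$ bound. Once that is pinned down, choosing $k=\Theta(\log n)$ and reading off the parameters is routine. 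The full details are deferred to Appendix~\ref{sec:expanders}.
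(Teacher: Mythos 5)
Your proposal is correct and follows the same route as the paper: a randomness-efficient repetition of the basic PCP verifier using a $\Theta(\log n)$-step random walk on an explicit constant-degree expander, with the Margulis-type construction supplying the expander. The one place where your argument diverges technically from the appendix is the choice of hitting bound: you invoke the AKS form $(\beta+\lambda)^{k-1}$, which with $\beta=1/2$ forces $\lambda<1/2$ and hence an explicit graph-powering step, whereas the paper uses the Arora--Barak form $((1-\lambda)\sqrt{\beta}+\lambda)^{k-1}$ (their Theorem 21.12), which is already below $1$ for \emph{any} $\lambda<1$ and lets the Margulis expander (with $\lambda = 5\sqrt{2}/8 \approx 0.88$) be used off the shelf without powering. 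Both bounds are correct and both yield $O(\log n)$ random bits and queries, so the difference is cosmetic rather than substantive.
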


Now we can use Theorem~\ref{thm:subconst_pcp} to prove the following.

\begin{theorem}
\label{thm:qcsp_red}
There exists a polynomial time reduction $f$, which takes a 3-CNF $n$-variable formula $\phi$
and returns a $qCSP$ instance $f(\phi)$ with $q=O(\log n)$, such that:
\begin{itemize}
\item if $\phi$ is satisfiable then $\val(f(\phi))=1$,
\item if $\phi$ is not satisfiable then $\val(f(\phi))\leq \frac{1}{n}$.
\end{itemize}
\end{theorem}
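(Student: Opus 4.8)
The plan is to combine Theorem~\ref{thm:subconst_pcp} with the standard translation of a PCP verifier into a constraint satisfaction problem. Since 3-SAT is $\NP$-complete, Theorem~\ref{thm:subconst_pcp} supplies a $(1/m,\, c_r\log m,\, c_q\log m)$-PCP verifier $V$ for 3-SAT, where $m$ is the length of the encoding of the input and $c_r,c_q$ are absolute constants; we may assume $m\ge n$. Given an $n$-variable 3-CNF formula $\phi$ of length $m=\poly(n)$, define $f(\phi)$ as follows. Its Boolean variables are the proof positions that $V$ can ever query on input $\phi$; since $V$ is nonadaptive, this set has size at most $c_q\log m\cdot 2^{c_r\log m}=\poly(n)$, and we index the variables by these positions. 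For each random string $\rho\in\B^{c_r\log m}$ (there are $2^{c_r\log m}=\poly(n)$ of them) we add a constraint $C_\rho$: on a given assignment to all proof variables, $C_\rho$ looks only at the at most $c_q\log m$ positions queried by $V$ on randomness $\rho$ (computable from $\rho$ alone by nonadaptivity) and outputs $1$ iff $V$ accepts. Hence every $C_\rho$ depends on at most $q:=c_q\log m=O(\log n)$ variables, so $f(\phi)$ is a qCSP with $q=O(\log n)$; moreover $\fsat(C_\rho)\le 2^q=\poly(n)$, so each constraint admits the compact representation fixed in Section~\ref{sec:preliminaries} (list $V_{C_\rho}$ together with the accepting answer patterns).

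This $f$ runs in polynomial time: there are only $\poly(n)$ random strings, and for each we simulate the polynomial-time machine $V$ --- either symbolically in the proof answers, or by enumerating the $\le 2^q=\poly(n)$ possible answer vectors --- to extract $V_{C_\rho}$ and its list of accepting patterns.

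Correctness follows directly from the PCP guarantees, using that assignments to the proof variables of $f(\phi)$ are in bijection with the relevant restrictions of proofs $\pi$, and that under such an assignment the fraction of constraints of $f(\phi)$ that are satisfied equals exactly $\pr_\rho[V(\phi,\pi)=1]$ (each $\rho$ is equally likely and corresponds to one constraint). If $\phi$ is satisfiable there is a proof $\pi$ with acceptance probability $1$, so the corresponding assignment satisfies every $C_\rho$ and $\val(f(\phi))=1$. If $\phi$ is unsatisfiable, then every proof is accepted with probability at most $1/m$, so no assignment satisfies more than a $1/m\le 1/n$ fraction of the constraints, i.e.\ $\val(f(\phi))\le 1/n$.

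The only delicate point --- and the reason Theorem~\ref{thm:subconst_pcp} is invoked rather than plain sequential repetition of the basic PCP theorem --- is keeping $f(\phi)$ of polynomial size while pushing the soundness down to $1/n$. We need \emph{simultaneously} $O(\log n)$ randomness, so that there are only $\poly(n)$ constraints and $\poly(n)$ variables, \emph{and} $O(\log n)$ query complexity, so that each constraint depends on $O(\log n)$ variables and therefore has a $\poly(n)$-size description; naive repetition would blow the randomness up to $\Theta(\log^2 n)$. The expander-walk argument behind Theorem~\ref{thm:subconst_pcp}, deferred to Appendix~\ref{sec:expanders}, is exactly what delivers both bounds at once.
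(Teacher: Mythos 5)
Your proof is correct and takes essentially the same route as the paper: translate the subconstant-error PCP verifier from Theorem~\ref{thm:subconst_pcp} into a qCSP by making one Boolean variable per proof position and one constraint per random string, then read off the two value bounds directly from the completeness and soundness guarantees. You are somewhat more explicit than the paper about distinguishing the input-encoding length $m$ from the variable count $n$ (and noting $1/m\le 1/n$) and about the $\fsat$ bound and compact representation, but these are the same calculations the paper performs in its own proof.
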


\begin{proof}
Take the PCP verifier $V$ for 3-SAT from Theorem~\ref{thm:subconst_pcp}. Assume it uses $r=O(\log n)$ random bits and queries the proof $q=O(\log n)$ times, for an $n$-variable formula $\phi$. One can see that the proof length $\ell$ is polynomial in $n$ (at most $q\cdot 2^r$). There will be $\ell$ variables in the resulting qCSP instance, one for every position in the proof. $f(\phi)$ consists of $2^r$ constraints, one for every possible sequence of random bits of length $r$. For a fixed sequence of random bits $s\in \B^r$, we create a constraint $C_s$. Given an assignment $v\in \B^l$, $C_s$ evaluates to $1$ if and only if $V$ accepts a proof $v$ (note that $C_s$ depends on at most $q$ variables). One can easily see that such a constraint satisfaction instance is satisfiable for satisfiable $\phi$. If $\phi$ is not satisfiable, then for every proof the probability of acceptance is at most $\frac{1}{n}$. It means that for every assignment $v\in \B^l$ at most $(\frac{1}{n})-$fraction of constraints can be satisfied by $v$. Finally, the reduction is polynomial time computable, because there are polynomially many constraints, each depending on $q=O(\log n)$ variables, thus one constraint can be described in $O(2^q)=\poly(n)$ time.
\end{proof}

\subsection{Construction.}
Let $\phi$ be an $N$-variables qCSP instance with $M$ clauses and $q=O(\log N)$. We want to
construct a synchronizing automaton $A_\phi$, such that the length of its shortest reset word allows us to reconstruct $\val(\phi)$ up to some error. 

The construction of $A_\phi$ is exactly the same as the one given for 3-CNF instances in Section~\ref{sec:constant}.
For a $q$-constraint $C$ we build a tree-gadget $T_C$ with $2^q$ leaves, each corresponding to an assignment to the variables $C$ depends on. As previously, the automaton has one sink state $s$ and $M$ tree-gadgets, one for every constraint. The construction still takes just polynomial time, the size of the automaton is polynomial in $M, N$ and $2^q=\poly(N)$.

\subsection{Properties of $A_\phi.$} Similarly as in the previous sections, the following
properties of $A_\phi$ can be established.

\begin{lemma}
\label{lemma:qcsp_synchro_sat}
Let $\phi$ be a $N$-variable qCSP instance. If $\phi$ is satisfiable and $v\in \{0,1\}^N$ is a satisfying assignment, then the word $w=2v0$ synchronizes $A_\phi$. Therefore $\res(A_\phi)\leq N+2$. 
\end{lemma}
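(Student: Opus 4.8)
The plan is to reproduce, almost verbatim, the argument used for Lemma~\ref{lemma:synchro_sat} in Section~\ref{sec:constant}, since the construction of $A_\phi$ for a qCSP instance is identical to the construction for a 3-CNF formula — the only difference being that a gadget $T_C$ now has $2^q$ leaves instead of $8$, but this plays no role in the synchronization-from-the-satisfying-assignment direction. First I would recall the two structural facts about the tree-gadget that carry over unchanged: applying the letter $2$ collapses every gadget $T_{C_i}$ onto its root $r_i$, so that $\delta(Q,2) = R \cup \{s\}$ where $R = \{r_1,\dots,r_M\}$; and, by the analogue of Proposition~\ref{prop:assign} and Corollary~\ref{corr:gadget_synchro}, reading an assignment $v \in \{0,1\}^N$ starting from $r_i$ leads to the leaf $q_N^{w_i}$ indexed by the restriction $w_i$ of $v$ to the variables on which $C_i$ depends.

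The key step is then the following. Suppose $\phi$ is satisfiable and $v \in \{0,1\}^N$ satisfies every constraint. Set $w = 2v0$. Apply $2$ first: $\delta(Q,2) = R \cup \{s\}$. Then apply $v$: each $r_i$ moves to the leaf $q_N^{w_i}$, where $w_i$ is the projection of $v$ onto $V_{C_i}$; since $v$ satisfies $C_i$, this leaf is one of the ``accepting'' leaves, so its two outgoing edges (for letters $0$ and $1$) both point to the sink $s$. Hence applying the final letter $0$ sends every such leaf to $s$, and of course $\delta(s, w') = s$ for any word $w'$. Therefore $\delta(Q, 2v0) = \{s\}$, so $w = 2v0$ is a reset word. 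Its length is $N+2$, giving $\res(A_\phi) \le N+2$.

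I do not anticipate any genuine obstacle here: this lemma is the ``easy direction'' of the reduction, and every ingredient is already available from Section~\ref{sec:constant}, modulo replacing ``$3$ relevant variables'' by ``at most $q$ relevant variables'' in the statements of Proposition~\ref{prop:assign} and Corollary~\ref{corr:gadget_synchro} — a change that does not affect their proofs, which only track the bits assigned to the variables a given constraint depends on. The one point worth stating carefully is that a leaf $q_N^{w_i}$ corresponding to a \emph{satisfying} partial assignment of $C_i$ has both its $0$-transition and its $1$-transition going to $s$, so a single trailing letter (here $0$) finishes the job uniformly across all gadgets. The harder direction — the analogue of Lemmas~\ref{lemma:synchro_unsat} and~\ref{lemma:synchro_unsat2}, relating $\res(A_\phi)$ to $\val(\phi)$ when $\phi$ is not (fully) satisfiable — is where the real work of Section~\ref{sec:power} lies, but that is not what the present lemma asks for.

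\begin{proof}
As in Section~\ref{sec:constant}, the automaton $A_\phi$ consists of $M$ disjoint tree-gadgets $T_{C_1},\dots,T_{C_M}$, with roots $r_1,\dots,r_M$, together with a sink state $s$, and $\delta(q,2)=r_i$ for every state $q$ of $T_{C_i}$ while $\delta(s,a)=s$ for all $a$. Hence $\delta(Q,2)=R\cup\{s\}$, where $R=\{r_1,\dots,r_M\}$.

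Let $v\in\{0,1\}^N$ be a satisfying assignment for $\phi$ and put $w=2v0$. By the (verbatim) analogue of Proposition~\ref{prop:assign} for $q$-constraints, for each $i$ we have $\delta(r_i,v)=q_N^{w_i}$, where $w_i$ is the restriction of $v$ to the variables on which $C_i$ depends. Since $v$ satisfies $C_i$, the word $w_i$ is a satisfying assignment of $C_i$, so by construction of the last level of $T_{C_i}$ we have $\delta(q_N^{w_i},0)=\delta(q_N^{w_i},1)=s$; equivalently, $\delta(r_i,v0)=s$, which is the analogue of Corollary~\ref{corr:gadget_synchro}. Therefore
\[
\delta(Q,w)=\delta\bigl(\delta(Q,2),v0\bigr)=\delta\bigl(R\cup\{s\},v0\bigr)=\{s\},
\]
so $w=2v0$ synchronizes $A_\phi$. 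Since $|w|=N+2$, we conclude $\res(A_\phi)\leq N+2$.
\end{proof}
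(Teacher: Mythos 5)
Your proof is correct and mirrors exactly what the paper intends: Lemma~\ref{lemma:qcsp_synchro_sat} is stated without a written proof precisely because it is the verbatim analogue of Lemma~\ref{lemma:synchro_sat}, and you reproduce that argument (collapse to roots with $2$, trace the satisfying assignment to an accepting leaf, finish with the trailing $0$) while correctly noting that replacing $3$ by $q$ relevant variables changes nothing. No gaps.
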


For the case when $\phi$ is not satisfiable we need a stronger statement than the one
from Lemma~\ref{lemma:synchro_unsat}. 

\begin{lemma}
\label{lemma:qcsp_synchro_unsat}
Let $\phi$ be a $N$-variable qCSP instance. If $w$ synchronizes $A_\phi$ then $|w|\geq \frac{1}{\val(\phi)}(N+1)$.
\end{lemma}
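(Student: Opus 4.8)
The plan is to follow the same template as Lemma~\ref{lemma:synchro_unsat} and Lemma~\ref{lemma:synchro_unsat2}, but track more carefully how many gadget-roots a word can ``push into $s$'' relative to its length. The key structural fact about a single tree-gadget $T_C$ is that, after one full pass of $N+1$ letters from $r$, we either reach $s$ (if the first $N$ letters encoded a satisfying assignment of $C$) or we return to $r$; and any binary word of length $\le N$ applied at $r$ only moves us down into the tree without reaching $s$. So as far as the action on the root set $R=\{r_1,\dots,r_M\}$ is concerned, a binary word $w$ behaves in ``blocks'' of length $N+1$: each maximal block of $N+1$ consecutive letters, read starting from a root, is a ``trial'' that either sends that root to $s$ or wastes the block and returns to the root.

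First I would reduce to the binary case. By Lemma~\ref{lemma:synchro_unsat2} (which applies verbatim here, since the proof only uses that $\delta(T_C,2)=\{r\}$ and the block structure, not the clause arity), if $w\in\{0,1,2\}^*$ synchronizes $A_\phi$ then there is a binary word $w'$ with $|w'|\le|w|$ synchronizing $R$ to $\{s\}$; it suffices to lower-bound $|w'|$. Second, I would argue that $w'$ can be assumed to have length a multiple of $N+1$, say $|w'| = t(N+1)$ (pad if necessary; this only helps the lower bound), and decompose $w' = u_1 u_2 \cdots u_t$ with each $|u_i| = N+1$. Third, the crucial counting step: for a fixed root $r_j$, consider the (unique) computation of $r_j$ under $w'$. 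Because the gadget returns to $r_j$ after every wasted block and, by Corollary-type reasoning, reaching $s$ requires the $N$ letters immediately preceding a ``block boundary at a root'' to be a satisfying assignment $v$ of $C_j$ — more precisely, there must be some index $i$ such that when we begin block $u_i$ we are at $r_j$ and the first $N$ letters of $u_i$ form a satisfying assignment of $C_j$. But once we are at $r_j$ at a block boundary, the position at the next block boundary depends only on the current block $u_i$, not on $j$; so the set of indices $i$ at which we are ``at a root'' is the same for all gadgets (it is determined by $w'$ alone — it is the set of $i$ such that block $u_{i-1}$, read from $r$, did not reach $s$, i.e. whose first $N$ letters do not satisfy the corresponding... wait, this is where care is needed: whether block $u_{i-1}$ reaches $s$ can differ between gadgets). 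Let me reorganize: for each $j$ there is at least one ``effective'' block index $i(j)$ whose first $N$ letters $v^{(i(j))}$ satisfy $C_j$, and the map $i\mapsto v^{(i)}$ ($v^{(i)}$ = first $N$ letters of block $u_i$) gives at most $t$ distinct assignments $v^{(1)},\dots,v^{(t)}$. Hence these $t$ assignments collectively cover all $M$ constraints: every $C_j$ is satisfied by some $v^{(i)}$.

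Fourth, I would convert ``$t$ assignments cover all constraints'' into a bound on $t$ via $\val(\phi)$. Each single assignment satisfies at most $\val(\phi)\cdot M$ constraints, so to cover all $M$ we need $t\ge 1/\val(\phi)$, hence $|w'| = t(N+1) \ge \frac{1}{\val(\phi)}(N+1)$, which gives the lemma.

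The main obstacle is the third step — making the ``block-by-block, root-or-waste'' dichotomy precise simultaneously for all $M$ gadgets when the computations in different gadgets desynchronize (one gadget may reach $s$ and stay there after block $i$, while another is still cycling). The clean way to handle this is: fix $j$; let $i_1 < i_2 < \cdots$ be the block indices at which gadget $j$'s computation sits at $r_j$ (these start with $i=1$ and continue until the first block that reaches $s$); then $r_j$ reaches $s$ under $w'$ iff for some such $i_m$ the first $N$ letters of $u_{i_m}$ satisfy $C_j$; and in all cases each such ``first-$N$-letters-of-$u_i$'' is one of the at most $t$ strings $v^{(1)},\dots,v^{(t)}$. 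I would state and prove a short sublemma to this effect (essentially the $N+1$-periodicity observation from the proof of Lemma~\ref{lemma:synchro_unsat}, iterated), and then the covering argument and the division by $\val(\phi)$ are routine. I should also double-check the off-by-one: a synchronizing $w'$ of length exactly $N+1$ corresponds to one assignment covering all constraints, i.e. $\val(\phi)=1$, consistent with $\frac{1}{\val(\phi)}(N+1) = N+1$; and the satisfiable upper bound $N+2$ from Lemma~\ref{lemma:qcsp_synchro_sat} matches (the extra $2$ is the leading $2$ and the trailing $0$, which in the binary reduction is absorbed into one padded block).
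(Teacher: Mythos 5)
Your proposal follows essentially the same route as the paper: reduce to a binary word via Lemma~\ref{lemma:synchro_unsat2}, normalize the length to a multiple of $N+1$, decompose into blocks $v_ic_i$ with $|v_i|=N$, argue that every constraint is satisfied by some $v_i$, and finish by the covering count (one assignment satisfies at most a $\val(\phi)$-fraction, so you need at least $1/\val(\phi)$ blocks). The covering step and the final arithmetic are correct.

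There is, however, one genuine error in the normalization step. You propose to \emph{pad} $w'$ up to the next multiple of $N+1$ and remark that ``this only helps the lower bound.'' It does not: padding produces a longer word $w''$ with $|w''|\geq |w'|$, and proving $|w''|\geq (N+1)/\val(\phi)$ gives no lower bound on the original $|w'|$ (at best you would get $|w'| > (1/\val(\phi)-1)(N+1)$, which is weaker than claimed). The paper instead \emph{truncates} to the largest multiple of $N+1$ not exceeding $|w'|$: since after $t(N+1)$ letters each root is either at $s$ or back at itself, and fewer than $N+1$ additional letters cannot bring a root from $r$ to $s$, the truncated word still synchronizes $R$ and is no longer than $w'$. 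That direction of the inequality is what makes the argument go through. Finally, your extended worry about gadgets ``desynchronizing'' is resolved more cleanly in the paper: for each constraint $C$, suppose toward contradiction that no $v_i$ satisfies $C$; then by induction (using Corollary~\ref{corr:gadget_synchro}) the root of $T_C$ satisfies $\delta(r,w')=r\neq s$, contradicting synchronization. This per-constraint contradiction makes it unnecessary to track the block index at which any particular gadget first reaches $s$, and your reorganized version eventually amounts to the same thing.
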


\begin{proof}
We prove a lower bound, thus we can focus on synchronizing a particular set of states. Let $R=\{r_1, r_2, \ldots, r_M\}$ be the set of roots of all tree-gadgets. Suppose $w$ synchronizes $R$ to $\{s\}$. By Lemma~\ref{lemma:synchro_unsat2} we can assume $w$ does not contain any occurrence of $2$. Also, we can assume (as in the proof of Lemma~\ref{lemma:synchro_unsat}) that the length of $w$ is a multiple of $(N+1)$. If it is not then we can cut out the last $|w| \mbox{ mod } (N+1)$ letters and the resulting word will still synchronize $R$ to $\{s\}$.

Decompose $w$ into the following parts: $w=v_1c_1v_2c_2\ldots v_kc_k$, where $v_i$ is a binary word of length $N$ and $c_i$ is a single binary character, for $i=1,2,\ldots, k$. We claim that for every constraint $C$ in $\phi$, some assignment $v_i$ (for $i\in \{1,2,\ldots, k\}$) satisfies $C$. Here a binary string of length $N$ is treated as a boolean assignment to the $N$ variables. Suppose for the sake of contradiction that there is a constraint $C$ in $\phi$ such that no $v_i$ satisfies $C$. Suppose $r$ is the root of the corresponding tree-gadget $T_C$. Using Corollary~\ref{corr:gadget_synchro} we can reason by induction that if $w'$ is a prefix of $w$ of length $d$ then $\delta(r,w')\in L_{d \; \mathrm{ mod } \; (N+1)}$. In particular $\delta(r,w)=r$.

Therefore we know that every constraint $C$ in $\phi$ is satisfied by some $v_i$. However, one assignment can satisfy at most $\val(\phi)$ constraints, hence $k\geq \frac{1}{\val(\phi)}$. The lemma follows.
\end{proof}

Now we are ready to prove the main theorem of this section.

\begin{theorem}\label{thm:power_easy}
There exists a constant $\eps>0$, such that $\SynAppx(\{0,1,2\}, n^\eps)$ is not solvable
in polynomial time, unless $\PTIME=\NP$.
\end{theorem}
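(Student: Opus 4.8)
The plan is to plug the hard qCSP instances of Theorem~\ref{thm:qcsp_red} into the tree-gadget construction of this section and then read off the length gap from Lemmas~\ref{lemma:qcsp_synchro_sat} and~\ref{lemma:qcsp_synchro_unsat}, being careful to keep track of the polynomial blow-up so that the value gap of the CSP becomes a genuinely super-constant gap in the number of states. Concretely, starting from a 3-CNF formula $\psi$ on $m$ variables, I would first apply the reduction $f$ to obtain a qCSP instance $\phi := f(\psi)$ with $N$ variables, $M$ constraints and arity $q = O(\log m)$, so that $\val(\phi)=1$ when $\psi$ is satisfiable and $\val(\phi)\le 1/m$ otherwise. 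Then I would build the automaton $A_\phi$ exactly as described above; it is always synchronizing and computable in polynomial time.

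The quantitative core is bounding the size of $A_\phi$. Each tree-gadget has at most $2^q$ states per level and $N+1$ levels, so $A_\phi$ has $n \le M(N+1)2^q+1$ states. Since $M = 2^{O(\log m)}$, $N \le q\cdot 2^{O(\log m)}$ and $2^q = 2^{O(\log m)}$ are all bounded by a fixed polynomial in $m$, there is an absolute constant $c$ with $n \le m^{c}$ for all $m$ larger than some constant. Next I would record the gap: if $\psi$ is satisfiable then Lemma~\ref{lemma:qcsp_synchro_sat} gives $\res(A_\phi) \le N+2$; if $\psi$ is not satisfiable then $\val(\phi)\le 1/m$, and Lemma~\ref{lemma:qcsp_synchro_unsat} shows that \emph{every} word synchronizing $A_\phi$ has length at least $\frac{1}{\val(\phi)}(N+1)\ge m(N+1)$.

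Finally I set $\eps := \frac{1}{2c}$ and argue that $\SynAppx(\{0,1,2\}, n^{\eps})$ is not in polynomial time unless $\PTIME = \NP$. Given a purported polynomial-time algorithm for it, I decide 3-SAT as follows: on $\psi$ with $m$ variables, for $m$ below a suitable absolute constant answer by brute force; otherwise construct $A_\phi$ (with $n \le m^{c}$ states), run the algorithm to get a synchronizing word $w$ with $|w| \le n^{\eps}\res(A_\phi)$, and answer ``satisfiable'' iff $|w| < m(N+1)$. For correctness, note $n^{\eps} \le m^{c\eps} = \sqrt{m}$: in the satisfiable case $|w| \le \sqrt{m}\,(N+2) \le 2\sqrt{m}\,(N+1) < m(N+1)$ once $m$ is a large enough constant (using $N\ge 1$), whereas in the unsatisfiable case $w$, being a synchronizing word, has $|w| \ge m(N+1)$. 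Thus the threshold $m(N+1)$ cleanly separates the two cases, 3-SAT is decided in polynomial time, and $\PTIME=\NP$.

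I do not expect a real obstacle in this step: the substantive work is already carried by Theorem~\ref{thm:subconst_pcp} (a PCP for $\NP$ with error $1/n$, logarithmic randomness and query complexity, and hence polynomial proof length, obtained via a random walk on an expander) and by the structural Lemma~\ref{lemma:qcsp_synchro_unsat}. The one point needing care is verifying that the composite reduction $\psi \mapsto \phi \mapsto A_\phi$ inflates the instance size only by a polynomial with a \emph{fixed} exponent $c$, so that the $\Theta(m)$ value gap of $\phi$ translates into an $n^{1/c}$ gap in reset-word length; choosing $\eps = \frac{1}{2c}$ then completes the proof.
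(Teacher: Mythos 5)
Your proposal follows essentially the same route as the paper: apply Theorem~\ref{thm:qcsp_red} to a 3-CNF formula to get a qCSP instance with a $1$ versus $1/m$ value gap, build the tree-gadget automaton of this section, invoke Lemmas~\ref{lemma:qcsp_synchro_sat} and~\ref{lemma:qcsp_synchro_unsat} for the length gap, and observe that the automaton size is $\poly(m)$ because $M$, $N$, and $2^q$ are all polynomial in $m$. The only difference is that you make the size bound ($n\le M(N+1)2^q+1$) and the final threshold/decision step explicit, whereas the paper compresses this into ``the size of $A_{f(\phi)}$ is polynomial in $N,M,2^q$''; the substance is identical and your argument is correct.
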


\begin{proof}
We reduce 3-SAT to $\SynAppx(\{0,1,2\}, n^\eps)$, for some constant $\eps>0$. Let $\phi$ be an $n$-variable 3-CNF formula $\phi$. We use Theorem~\ref{thm:qcsp_red} to obtain
a qCSP instance $f(\phi)$ on $N$ variables and then convert it into a $G$-state automaton $A_{f(\phi)}$. If $\phi$ is 
satisfiable,
then by  Lemma~\ref{lemma:qcsp_synchro_sat} $\res(A_{f(\phi)})\leq N+2$. On the other hand, if $\phi$ is not satisfiable,
then $\val(f(\phi))\leq 1/n$, hence by Lemma~\ref{lemma:qcsp_synchro_unsat} $\res(A_{f(\phi)})\geq n(N+1)$. The
ratio between  those two quantities is $\frac{n(N+1)}{N+2}=\Omega(n)$. 

It remains to show that $n=\Omega(G^\eps)$ for some constant $\eps>0$.
In other words, we need to show that $G$ is polynomial in $n$. This holds,
because $f$ is a polynomial time reduction, hence $N,M = \poly(n)$ and the size of $A_{f(\phi)}$ is polynomial with respect to $N,M, 2^q$, but $q=O(\log N)=O(\log n)$ so $2^q=\poly(n)$.
\end{proof}

\begin{remark}
By keeping track of all the constants, one can obtain $n^\eps-$hardness for $\eps\approx 0.0095$,
but this is anyway subsumed by the next section.
\end{remark}

\section{Hardness with Ratio $n^{1-\eps}$}\label{sec:main}
In this section we prove the main result of the paper. 
It is not enough to use the reasoning from the previous section and simply optimize the constants. In fact, the 
strongest hardness result that we can possibly obtain by applying Theorem~\ref{thm:subconst_pcp} is $n^{\eps}$ for some 
tiny constant $\eps>0$. This stems from the fact that in our reduction we require the number of queries $q$ to be 
logarithmic in the size of the instance. If this is not the case, then the reduction takes superpolynomial time.
However, the crucial  observation is that the reduction can be modified so that we do not need the query complexity of
the verifier to be logarithmic. It suffices that the free bit complexity (defined below) is logarithmic.

\begin{theorem}\label{thm:power_hard}
For every constant $\eps>0$, $\SynAppx(\{0,1,2\}, n^{1-\eps})$ is not solvable
in polynomial time, unless $\PTIME=\NP$.
\end{theorem}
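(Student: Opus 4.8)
\medskip
\noindent\emph{Proof strategy.}
The plan is to run exactly the tree-gadget reduction of Sections~\ref{sec:constant} and~\ref{sec:power}, but to feed it a much stronger family of hard constraint satisfaction problems, obtained from PCP verifiers of logarithmic \emph{free bit} complexity rather than logarithmic query complexity. First I would recall free bit complexity: a verifier has free bit complexity $f$ if, for every fixed random string $s$, at most $2^{f}$ answer patterns lead to acceptance. Translating such a verifier into a CSP exactly as in the proof of Theorem~\ref{thm:qcsp_red} --- one variable per proof position, one constraint $C_{s}$ per random string $s$ --- yields an instance $\phi$ with $\fsat(\phi)\le 2^{f}$, no matter how many positions each $C_{s}$ queries. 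The point is that, unlike the query complexity, the free bit complexity can be pushed down to an arbitrarily small fraction of the randomness while the soundness stays essentially $2^{-r}$: invoking the amortized-free-bit PCPs of Håstad and Zuckerman (which give $\NP$-hardness of approximating clique within $n^{1-\eps}$), for every $\delta>0$ there is a PCP verifier for $3$-SAT using $r$ random bits, proof length $\poly(n)$, free bit complexity at most $\delta r$, and soundness at most $2^{-(1-\delta)r}$, where moreover $r$ may be taken to be an arbitrarily large multiple of $\log n$. I would package this, in the style of Theorem~\ref{thm:qcsp_red}, as: for every $\delta>0$ there is a polynomial-time reduction from an $n$-variable $3$-CNF $\psi$ to a CSP $\phi$ on $N$ variables with $M$ constraints and $\fsat(\phi)\le K$, where $M,N,K=\poly(n)$, such that $\val(\phi)=1$ if $\psi$ is satisfiable and $\val(\phi)\le\beta$ otherwise, with $1/\beta\ge (MNK)^{1-\delta'}$ and $\delta'\to 0$ as $\delta\to 0$. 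This last inequality is a direct FGLSS-style parameter count: $\log(1/\beta)\ge(1-\delta)r$ while $\log(MNK)=r+\log N+f\le r(1+\delta+o(1))$ once $r$ is chosen large enough relative to $\log N$.

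Second, I would adapt the automaton $A_{\phi}$ to constraints that depend on many variables, since the uncompressed tree of Section~\ref{sec:constant} for a constraint $C$ has $2^{|V_{C}|}$ leaves, now possibly superpolynomial. Instead I would build a \emph{compressed} tree-gadget $T_{C}$ with levels $L_{0},\dots,L_{N}$, where a state at level $j$ records which of the at most $K$ satisfying assignments of $C$ is still consistent with the bits read so far on $V_{C}\cap\{x_{1},\dots,x_{j}\}$; at most $K$ distinct such partial assignments occur, so each level has at most $K+1$ states, the extra one being a \emph{dead} state used when no satisfying assignment survives. A letter read at a variable outside $V_{C}$ keeps the current state; at a variable of $V_{C}$ it refines the consistent set, possibly collapsing it to the dead state; the dead state propagates down its own chain of length $N$; at level $N$ every non-dead state goes to the sink $s$ (the length-$N$ prefix restricted to $V_{C}$ then equals a satisfying assignment, hence satisfies $C$) and the dead state returns to the root $r$; and $\delta(q,2)=r$ for every $q\in T_{C}$. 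This gadget has $O(NK)$ states, hence is polynomial, and one checks that its external behaviour is identical to the original one: the analogues of Proposition~\ref{prop:assign} and Corollary~\ref{corr:gadget_synchro} hold (after reading $vc$ with $|v|=N$, $c\in\B$, we are at $s$ if $v$ satisfies $C$ and at $r$ otherwise), as does the invariant that a prefix $w'$ lands in level $|w'|\bmod(N+1)$.

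With the compressed gadget in hand, Lemmas~\ref{lemma:qcsp_synchro_sat},~\ref{lemma:qcsp_synchro_unsat} and~\ref{lemma:synchro_unsat2} go through essentially verbatim, so $\res(A_{\phi})\le N+2$ when $\phi$ is satisfiable and $\res(A_{\phi})\ge\frac{1}{\val(\phi)}(N+1)$ in general. Given a target $\eps>0$ I would fix $\delta$ small enough that the $\delta'$ above satisfies $\delta'\le\eps/2$, let $G=O(MNK)=\poly(n)$ be the number of states of $A_{\phi}$, and observe that in the unsatisfiable case $\res(A_{\phi})\ge\frac{1}{\beta}(N+1)\ge G^{1-\eps/2}(N+1)$ (up to constants absorbed into the slack), whereas in the satisfiable case $\res(A_{\phi})\le N+2$. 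Since $G^{1-\eps/2}/G^{1-\eps}=G^{\eps/2}\to\infty$, for large $n$ any word returned by a hypothetical $G^{1-\eps}$-approximation algorithm has a length that separates the two cases, so $3$-SAT would be decided in polynomial time and $\PTIME=\NP$ would follow. If the binary alphabet is desired, Lemma~\ref{lemma:binarization} is applied at the end.

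The step I expect to be the real obstacle is the first one: extracting from the Håstad--Zuckerman machinery a PCP in which the free bit complexity \emph{together with} the logarithm of the proof length is a small enough fraction of the randomness, while the soundness is still about $2^{-r}$ --- equivalently, a derandomized PCP with amortized free bit complexity tending to $0$ together with near-polynomial proof length and near-optimal randomness. This is exactly the regime that the query-complexity reduction of Section~\ref{sec:power} cannot reach, since there $q=O(\log n)$ is forced by polynomiality of the gadget, which caps the achievable gap at $n^{\Theta(\delta)}$. The remaining ingredients --- the compressed gadget and the re-verification of the synchronization lemmas --- are routine extensions of the material already developed in Sections~\ref{sec:constant} and~\ref{sec:power}.
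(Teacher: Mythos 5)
Your proposal follows essentially the same route as the paper: replace the query-complexity PCP of Section~\ref{sec:power} by the Håstad--Zuckerman free-bit PCP, use sequential repetition to make the proof length a negligible fraction of the randomness, translate to a CSP with $\fsat$ bounded by the number of accepting answer patterns, compress the tree-gadget so its size depends on $\fsat(C)$ rather than $2^{|V_C|}$, and then read off the $n^{1-\eps}$ gap. The parameters you quote ($r$ random bits, free bit complexity $\delta r$, soundness about $2^{-(1-\delta)r}$, proof length $\poly(n)$ made small relative to $2^r$ by repetition) are exactly what the paper extracts from Theorem~\ref{thm:zuckerman} inside the proof of Theorem~\ref{thm:qcsp_free}, and your FGLSS-style count $\log(1/\beta)\ge(1-\delta)r$ versus $\log(MNK)\le(1+\delta+o(1))r$ matches the paper's accounting.

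The only place you genuinely diverge is in how the compressed gadget is realized. The paper keeps the skeleton of the original tree restricted to the $\le K$ satisfying leaves and, for each pruned sibling subtree, hangs a fresh ``dead'' path of the appropriate length back to the root, giving a per-gadget bound of $O(N^2K)$ states. You instead make the state at level $j$ be the set of surviving satisfying assignments (equivalently, the projection onto $V_C\cap\{x_1,\dots,x_j\}$), together with a single shared dead chain of length $N$; since there are at most $K$ nonempty projections per level this gives $O(NK)$ states per gadget. Your version is a bit cleaner and slightly tighter, but the two gadgets have identical external behaviour (same Proposition~\ref{prop:assign}/Corollary~\ref{corr:gadget_synchro} invariants, same level-tracking lemma), so the synchronization lemmas and the final ratio computation are unaffected; both state bounds are polynomial and the exponent $1-\eps$ is reached either way. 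So: same proof, with a marginally more economical gadget.
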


\subsection{Free Bit Complexity and Stronger PCP Theorems.}
Let us first briefly introduce the notion of free bit complexity.
For a comprehensive discussion see~\cite{Bellare98}.

\begin{definition}
Consider a PCP verifier $V$ using $r$ random bits on any input $x$.
For a fixed input $x$ and a sequence of random bits $R\in\B^r$,
define $G(x,R)$ to be the set of sequences of answers to the questions
asked by $V$, which result in an acceptance.
We say that $V$ has free bit complexity $f$ if $|G(x,R)|\leq 2^f$ and there is a polynomial time algorithm which
computes $G(x,R)$ for given $x$ and $R$.
\end{definition}

The set of languages for which there exists a verifier with soundness $p$, free bit complexity $f$,
and proof length $\ell$ is denoted by $\FPCP_p[r,f,\ell]$.

Håstad in his seminal work~\cite{Hastad96} proved that approximating the maximum clique within the factor $n^{1-\eps}$ 
is hard, for every $\eps>0$. To obtain this result he constructs PCP verifiers with arbitrarily small \textit{amortized free bit complexity}\footnote{Amortized free bit complexity is a parameter of a PCP verifier which essentially corresponds to the ratio between the free bit complexity and the logarithm of error probability.}. We state his result in a more recent and stronger version~\cite{HK01}:

\begin{theorem}\label{thm:amort_hastad}
For every $\eps>0$, there exist constants $t\in \N$ and $\alpha, \beta>0$ such that $\NP\subseteq \FPCP_{2^{-t}}[\beta \log n, \eps \cdot t, n^\alpha].$
\end{theorem}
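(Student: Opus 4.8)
The plan is to prove Theorem~\ref{thm:amort_hastad} by composing the standard two‑prover reduction with a long‑code (dictatorship) inner test in the style of Håstad, arranged so that the number of ``free'' query bits stays a fixed constant while the soundness is driven down geometrically by reading more, but mutually constrained, bits of the codes. First I would pass from 3‑SAT to \textsc{Label Cover}: starting from Theorem~\ref{thm:pcp}, reformulated as a two‑prover one‑round game with soundness bounded away from $1$, and then applying $k$‑fold parallel repetition (Raz's parallel repetition theorem), one obtains for every $\delta>0$ a polynomial‑time reduction from 3‑SAT to a bipartite projection‑constraint instance with sides $U,W$, label set $[d^{k}]$ with $d=O(1)$ and $k=k(\delta)$ a constant, projections $\pi_{uw}\colon[d^{k}]\to[d^{k}]$, completeness $1$ and soundness $\delta$, with $n^{O(k)}$ vertices and constraints. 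The value of $\delta$ is fixed only at the end.

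Next I would set up the verifier $V$. Its proof is meant to contain, for each $u\in U$ and $w\in W$, the long code $A_u,B_w\colon\{-1,1\}^{[d^{k}]}\to\{-1,1\}$ of the corresponding label, each table suitably folded (over $\pm1$, and on the $W$ side over the projection) so that a decoded Fourier character is forced to be a single coordinate. $V$ picks a random constraint $(u,w)$, reads a \emph{bounded} number of ``seed'' entries --- a constant number of bits of $A_u$ and of $B_w$, including one linking $A_u$ and $B_w$ through $\pi_{uw}$ --- and then reads $\ell$ further entries of $B_w$, where $\ell$ is a large constant and acceptance is required to satisfy a rigid system of $\ell-O(1)$ dictatorship/linearity constraints that are automatically consistent for a genuine long code. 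Completeness is perfect: a correctly encoded satisfying labeling passes every check, using $\pi_{uw}(\text{label of }w)=\text{label of }u$. Obtaining this with \emph{perfect} completeness, rather than via the noisy tests of~\cite{Hastad96}, is precisely the refinement of~\cite{HK01}, whose dictatorship gadget I would cite for the exact constraints.

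For soundness I would run the usual arithmetization: write the acceptance indicator as an average of products of entries of $A_u$ and $B_w$, expand the tables in the character basis, and exploit the folding to show that if $V$ accepts with probability noticeably above $2^{-\Omega(\ell)}$, then a constant fraction of constraints $(u,w)$ admit distributions over labels --- read off the squared Fourier coefficients of $A_u$ and $B_w$ --- that respect $\pi_{uw}$ with probability bounded below by a quantity depending on $\ell$ only. Randomized decoding then produces a labeling of value exceeding $\delta$, contradicting the soundness of the \textsc{Label Cover} instance once $\delta$ was chosen small enough relative to $\ell$. It remains to collect parameters: the randomness is $O(\log n)$ to choose the constraint plus $O(\ell)=O(1)$ bits to choose the query points, hence $r=\beta\log n$; the proof length is $n^{O(k)}\cdot 2^{d^{k}}=n^{\alpha}$; the free bit complexity is the constant number of seed queries, some $f_0=f_0(k)$; and the soundness is $2^{-t}$ with $t=\Theta(\ell)$. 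Choosing the constant $\ell$ large enough that $f_0\le\eps\,t$ yields $\NP\subseteq\FPCP_{2^{-t}}[\beta\log n,\eps\, t,n^{\alpha}]$ with $t,\alpha,\beta$ constants depending only on $\eps$.

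I expect the main obstacle to be the soundness analysis of the dictatorship test with perfect completeness: plain linearity (BLR) testing does not suffice, since a non‑dictator linear function satisfies every linearity constraint, so one genuinely needs the Fourier‑analytic argument together with the projection‑folding to force the decoded character down to a single coordinate --- and to do so while keeping the number of free bits independent of the soundness‑amplification parameter $\ell$. This is exactly the technology of~\cite{Hastad96,HK01}; the two‑prover reduction, the parallel repetition, and the parameter bookkeeping are routine by comparison.
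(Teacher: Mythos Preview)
The paper does not prove Theorem~\ref{thm:amort_hastad}; it is quoted without proof as a result of H\aa stad~\cite{Hastad96} in the perfect--completeness form of H\aa stad--Khot~\cite{HK01}. So there is no ``paper's own proof'' to compare against. Your outline---Label Cover from the PCP theorem via Raz's parallel repetition, long--code tables on both sides, a dictatorship inner test whose acceptance predicate pins down all but a bounded number of the queried bits, and Fourier decoding for soundness---is exactly the architecture of the cited works, and at the level of a proof plan it is on target.

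One genuine wrinkle in your bookkeeping deserves attention. You set the free--bit count to $f_0=f_0(k)$, with $k$ the parallel--repetition parameter, and separately drive the soundness down through a second parameter $\ell$; but you also say that $\delta$ (hence $k$) must be chosen small enough \emph{relative to} $\ell$. If $f_0$ really grew with $k$, this would be circular: increasing $\ell$ forces $k$ up, which pushes $f_0$ up, possibly above $\eps t$. In the actual constructions this circularity does not arise because the free--bit count is fixed by the \emph{test design} and is independent of the Label--Cover repetition parameter; one first fixes the inner test (which determines both $f_0$ and the soundness target $2^{-t}$, with $f_0/t$ as small as desired), and only afterwards chooses the repetition parameter $k$ large enough that the Fourier--decoding argument beats the Label--Cover soundness $\delta$. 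Once you decouple the parameters in that order, your plan goes through; as written, the dependence $f_0=f_0(k)$ is the one place where the argument would not close.
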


In the next step we need to amplify the error probability, as we did in the previous section. It turns out that the
amplification using expander walks is too weak for our purpose. Håstad~\cite{Hastad96}, following the approach of
Bellare et al.~\cite{Bellare98}, uses sequential repetition together with a technique to reduce the demand for random bits. (See Proposition~11.2, Corollary~11.3 in~\cite{Bellare98}.) Unfortunately, this procedure involves randomization, so
his MaxClique hardness result holds under the assumption that $\ZPP\neq \NP$.

In his breakthrough paper Zuckerman~\cite{Zuckerman06} showed how to derandomize Håstad's MaxClique hardness result 
by giving a deterministic method for amplifying the error probability of PCP verifiers. He constructs very efficient 
randomness extractors, which then by known reductions allow to perform error amplification. One can conclude the 
following result from Theorem~\ref{thm:amort_hastad} and his result (see also Lemma~6.4 and
Theorem~1.1 in~\cite{Zuckerman06}). 

\begin{theorem}\label{thm:zuckerman}
For every $\eps>0$, there exist $c, \alpha>0$ such that for $t=c\log n$ it holds that
$\NP\subseteq  \FPCP_{2^{-t}}[(1+\eps) t, \eps t, n^\alpha]$.
\end{theorem}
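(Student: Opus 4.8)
The target is Theorem~\ref{thm:power_hard}: for every constant $\eps>0$, $\SynAppx(\{0,1,2\},n^{1-\eps})$ is not solvable in polynomial time unless $\PTIME=\NP$. The plan is to follow the template of Section~\ref{sec:power}, but replace Theorem~\ref{thm:subconst_pcp} with Theorem~\ref{thm:zuckerman}, so that the combinatorial blow-up of the tree-gadget construction is governed by the \emph{free bit complexity} rather than by the query complexity. Concretely, fix $\eps'>0$ (to be chosen small in terms of $\eps$) and take the verifier $V$ from Theorem~\ref{thm:zuckerman}: with $t=c\log n$, $V$ uses $r=(1+\eps')t$ random bits, has soundness $2^{-t}=n^{-c}$, free bit complexity $f=\eps' t=\eps' c\log n$, and proof length $\ell=n^\alpha=\poly(n)$. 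The key point is that $2^f=n^{O(\eps')}$ is only a \emph{small} polynomial in $n$, while the number of constraints $M=2^r=n^{(1+\eps')c}$ is a polynomial of essentially the same order as the soundness exponent.

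\textbf{From the verifier to a CSP with small $\fsat$.} First I would convert $V$ into a CSP $\phi = f(\psi)$ exactly as in the proof of Theorem~\ref{thm:qcsp_red}: $\ell$ boolean variables (one per proof position), one constraint $C_R$ per random string $R\in\B^r$, where $C_R(v)=1$ iff $V$ accepts proof $v$ on randomness $R$. The crucial difference is the bound I record about each constraint. Here $C_R$ need not depend on only $O(\log n)$ variables — it may query polynomially many positions — but by the definition of free bit complexity, for each $R$ the set $G(\psi,R)$ of accepting answer-patterns has size $\le 2^f$ and is computable in polynomial time. Hence $C_R$ is satisfied by at most $2^f$ assignments to the variables it depends on, i.e.\ $\fsat(\phi)\le 2^f = n^{O(\eps')} = \poly(N)$ where $N=\ell$. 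So $\phi$ is a CSP of the type the paper explicitly reserves as its representation class: describable by $V_{C_R}$ together with a list of $\le 2^f$ satisfying patterns, thus of polynomial size, and constructible in polynomial time. Completeness and soundness carry over verbatim: $\psi$ satisfiable $\Rightarrow \val(\phi)=1$; $\psi$ unsatisfiable $\Rightarrow \val(\phi)\le 2^{-t} = n^{-c}$.

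\textbf{From the CSP to the automaton.} Next I would apply the tree-gadget construction of Section~\ref{sec:power} to $\phi$, building $A_\phi$ with one sink state and one tree-gadget $T_C$ per constraint $C$ of $\phi$. The one thing to check — and I expect this to be the main obstacle, or at least the point requiring the most care — is that the tree-gadget still has polynomial size even though a constraint now depends on polynomially many variables rather than $O(\log n)$ of them. But this is fine: the uncompressed tree has height $N$ and the compressed tree-gadget $T_C$ has at most one branching level per variable in $V_C$, so it has $O(N\cdot 2^{|V_C|})$ states. Since $|V_C|$ can be as large as $N$, I must instead observe that what matters for the \emph{leaves} is $\fsat$: on level $N$ the relevant quantity is the number of distinct assignment-prefixes, which is $2^{|V_C|}$ — this is the problem. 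The fix, which I believe is exactly the point of tracking $\fsat$ rather than $q$: the gadget need only distinguish the \emph{accepting} leaves; one can collapse all non-satisfying leaves (they all route back to $r$) so that the total number of states is $O(N \cdot \min(2^{|V_C|}, \text{something}))$. I would need to re-examine the compression argument to confirm that after merging isomorphic subtrees the size is actually $O(N\cdot 2^f)$ — intuitively, because at most $2^f = \fsat(C)$ root-to-leaf paths lead to $s$, the ``interesting'' part of the tree has only $2^f$ leaves and $O(N\cdot 2^f)$ nodes, while the rest collapses. Granting this, the automaton $A_\phi$ has $G = \poly(N,M,2^f) = \poly(n)$ states.

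\textbf{Putting it together.} Lemmas~\ref{lemma:qcsp_synchro_sat} and~\ref{lemma:qcsp_synchro_unsat} apply to $A_\phi$ without change, since their proofs use only Corollary~\ref{corr:gadget_synchro} and the level-structure of the tree-gadgets, not the query complexity. So if $\psi$ is satisfiable then $\res(A_\phi)\le N+2$, and if not then $\res(A_\phi)\ge \frac{1}{\val(\phi)}(N+1)\ge n^{c}(N+1)$. The gap is $\ge n^{c}(N+1)/(N+2) = \Omega(n^c)$. Now express this in terms of the automaton size $G$: since $G\le n^{K}$ for a constant $K$ depending on $\alpha$, $c$, and $\eps'$, the gap is $\Omega(G^{c/K})$. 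To get ratio $G^{1-\eps}$ I would choose the parameters in Theorem~\ref{thm:zuckerman} so that $K$ is close to $c$ — this is exactly why the theorem is stated with random-bit count $(1+\eps')t$ and proof length $n^\alpha$: making $\eps'$ and $\alpha$ small forces $M=2^{(1+\eps')t}=n^{(1+\eps')c}$ and $\ell=n^\alpha$ to contribute only a $1+o(1)$ factor to the exponent of $G$, so $G = n^{(1+o(1))c}$ and the gap $\Omega(n^c) = \Omega(G^{1-\eps})$ for $\eps$ as small as desired once $\eps'$ and $\alpha$ are chosen small enough relative to $\eps$. A polynomial-time $G^{1-\eps}$-approximation would then distinguish satisfiable from unsatisfiable $\psi$, giving $\PTIME=\NP$. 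Finally, Lemma~\ref{lemma:binarization} transfers the hardness from $\Sigma=\{0,1,2\}$ to the binary alphabet, which the abstract advertises; for the theorem as stated over $\{0,1,2\}$ this last step is not even needed. The main obstacle, to reiterate, is verifying the $O(N\cdot 2^f)$ size bound for the tree-gadget of a constraint that syntactically depends on many variables — i.e.\ confirming that the subtree-merging compression exploits small $\fsat$ and not just small $q$.
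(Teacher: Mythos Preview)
You are not proving the stated theorem. Theorem~\ref{thm:zuckerman} is a PCP result (Zuckerman's derandomization of H\r{a}stad's amortized-free-bit verifier); the paper does \emph{not} prove it but cites it from~\cite{Zuckerman06}. Your proposal instead sketches a proof of Theorem~\ref{thm:power_hard}, the $n^{1-\eps}$ inapproximability of $\SynAppx$, using Theorem~\ref{thm:zuckerman} as a black box. So as a proof of the displayed statement, the proposal is simply off target.

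If one reads your proposal as an attempt at Theorem~\ref{thm:power_hard}, then your outline is essentially the paper's own: pass through Theorem~\ref{thm:qcsp_free} to get a CSP with few constraints, small $\fsat$, and a large completeness/soundness gap; build one tree-gadget per constraint; and compare $\res(\hat A_\phi)$ in the satisfiable and unsatisfiable cases via the analogues of Lemmas~\ref{lemma:qcsp_synchro_sat} and~\ref{lemma:qcsp_synchro_unsat}. You correctly isolate the one new idea---that the gadget size must be controlled by $\fsat$ rather than by the query count---and you correctly guess the fix: keep only the root-to-leaf paths to the accepting leaves and replace each discarded subtree by a single path back to the root. This is exactly the ``compressed tree-gadget'' $\hat T_C$ of Section~\ref{sec:main}; the paper's size bound is $O(N^2K)$ per gadget (Lemma~\ref{thm:small_compressed}), not the $O(NK)$ you conjecture, because each of the $O(NK)$ nodes of the skeleton may have a length-$N$ path hanging off it.

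One substantive slip in your parameter discussion: you write that the theorem is stated with proof length $n^\alpha$ so that ``making $\eps'$ and $\alpha$ small'' forces the right exponent. But $\alpha$ is \emph{not} a free parameter in Theorem~\ref{thm:zuckerman}; it is handed to you and may be large. The paper deals with this in the proof of Theorem~\ref{thm:qcsp_free} by first applying $k$-fold sequential repetition to the verifier, which multiplies $t$ (and hence $r$ and $f$) by $k$ while leaving the proof length at $n^\alpha$; choosing $k$ large enough makes $\alpha<\eps kc$, i.e.\ $N=n^\alpha\le M^\eps$. Without this step your exponent bookkeeping does not close.
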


Based on the above theorem, we can prove the following very strong analogue of Theorem~\ref{thm:qcsp_red}. \begin{theorem}
\label{thm:qcsp_free}
For every $\eps>0$, there exists a polynomial time reduction $f$, which takes an $n$-variable 3-CNF formula $\phi$
and returns an $N$-variable $CSP$ instance $f(\phi)$ with $M$ constraints, such that:
\begin{itemize}
\item $N\leq M^{\eps}$,
\item if $\phi$ is satisfiable then $\val(f(\phi))=1$,
\item if $\phi$ is not satisfiable then $\val(f(\phi))\leq \frac{1}{M^{1-\eps}}$,
\item $\fsat(f(\phi))\leq M^{\eps}$.
\end{itemize}
\end{theorem}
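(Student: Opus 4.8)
The plan is to mirror the proof of Theorem~\ref{thm:qcsp_red}, replacing the query-efficient verifier of Theorem~\ref{thm:subconst_pcp} by the free-bit-efficient verifier of Theorem~\ref{thm:zuckerman}. Fix the target $\eps>0$ and pick a small auxiliary parameter $\delta=\delta(\eps)>0$, small enough that $(1+\delta)(1-\eps)\le 1$ and $\delta(1-\eps)\le \eps$ both hold (e.g. $\delta=\eps/2$). Apply Theorem~\ref{thm:zuckerman} with parameter $\delta$ to obtain a verifier $V$ for $3$-SAT using $(1+\delta)t$ random bits, free bit complexity $\delta t$, proof length $\ell\le n^\alpha$ and soundness $2^{-t}$, where $t=c\log n$ and $c,\alpha>0$ are the constants supplied by the theorem. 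Since those constants may be inconvenient, I would next boost the soundness by sequential repetition: running $V$ independently $k$ times on the \emph{same} proof yields a verifier $V^{(k)}$ with soundness $2^{-kt}$, $(1+\delta)(kt)$ random bits, free bit complexity $\delta(kt)$ (the accepting answer set is the product of the $k$ individual ones, still polynomial-time computable), and, crucially, the same proof length $\ell\le n^\alpha$. Taking $k$ a large enough constant makes $\ell$ negligible against $2^{(1+\delta)kt}$, which is exactly what the first bullet of the theorem requires.

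Given $V^{(k)}$, construct $f(\phi)$ exactly as in Theorem~\ref{thm:qcsp_red}: introduce one CSP variable per proof position, so $N=\ell$, and one constraint $C_s$ per random string $s$ of $V^{(k)}$, so $M=2^{(1+\delta)kt}$, where $C_s$ is satisfied by an assignment $v$ (read as a proof) iff $V^{(k)}$ accepts $v$ on randomness $s$. For completeness, a satisfying assignment of $\phi$ gives a proof accepted with probability $1$, so $\val(f(\phi))=1$. For soundness, if $\phi$ is unsatisfiable every proof is accepted with probability at most $2^{-kt}$, i.e. every assignment satisfies at most a $2^{-kt}$ fraction of the constraints, so $\val(f(\phi))\le 2^{-kt}$. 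For a fixed $s$ the set $G(x,s)$ of accepting answer patterns has size at most $2^{\delta kt}$ and is polynomial-time computable, so $C_s$ depends on a polynomial-size set of variables with at most $2^{\delta kt}$ satisfying assignments on that set; hence $\fsat(f(\phi))\le 2^{\delta kt}$ and each constraint admits the polynomial-size representation required by our definition of a CSP.

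It then remains to verify the four numerical bounds, all of which reduce to the choices of $\delta$ and $k$ once everything is written as a power of $M=2^{(1+\delta)kt}=n^{(1+\delta)kc}$: we get $N=n^\alpha=M^{\alpha/((1+\delta)kc)}\le M^\eps$ as soon as $k\ge \alpha/((1+\delta)c\eps)$; $\fsat\le 2^{\delta kt}=M^{\delta/(1+\delta)}\le M^\eps$ by $\delta(1-\eps)\le\eps$; and $\val(f(\phi))\le 2^{-kt}=M^{-1/(1+\delta)}\le M^{-(1-\eps)}$ by $(1+\delta)(1-\eps)\le 1$. The reduction runs in polynomial time since the number of random bits is $(1+\delta)kt=O(\log n)$, there are $M=\poly(n)$ constraints, and each is described in $\poly(n)$ space.

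The main obstacle is the first bullet, $N\le M^\eps$. The natural one-variable-per-proof-position construction ties $N$ to the proof length, which the PCP theorems do not let us shrink, so instead we must inflate the number of constraints; plain padding with dummy random bits would do this but simultaneously inflates the effective $1-\eps$ exponent in the soundness bound, recreating the tension. Sequential repetition is the right fix precisely because it multiplies both the number of random bits (hence $M$) and the free bit complexity by $k$ while leaving the proof length untouched and leaving the ratios (free bits)$/\log(1/\text{soundness})$ and $\log(1/\text{soundness})/(\text{random bits})$ unchanged, so increasing $k$ buys the slack needed for $N\le M^\eps$ without damaging the other three conditions. A minor technical point to check is that sequential repetition preserves polynomial-time computability of the accepting answer set (it is the product set), so $\fsat$ stays polynomial in $N$.
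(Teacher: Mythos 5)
Your proposal is correct and follows essentially the same route as the paper: start from Theorem~\ref{thm:zuckerman}, use sequential repetition to drive the proof length $n^\alpha$ below $M^\eps$ while preserving free bit complexity (the accepting set is the product set), then build the CSP with one constraint per random string and one variable per proof position. The only cosmetic difference is that you introduce an auxiliary slack parameter $\delta$ (your two stated conditions on $\delta$ are in fact algebraically equivalent), whereas the paper simply invokes Theorem~\ref{thm:zuckerman} with $\eps$ itself and relies on $\frac{1}{1+\eps}\ge 1-\eps$ and $\frac{\eps}{1+\eps}\le\eps$; both choices work.
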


\begin{proof}
Fix an $\eps>0$. We use Theorem~\ref{thm:zuckerman} to conclude $\NP\subseteq  \FPCP_{2^{-t}}[(1+\eps) t, \eps t, n^\alpha]$ for some $\alpha, c>0$ and $t=c\log n$. As a preliminary step, we make the proof length negligible (because $\alpha$ can be arbitrarily big) using the simple sequential repetition. By repeating the verification procedure $k$ times we obtain $\NP\subseteq  \FPCP_{2^{-kt}}[(1+\eps) kt, \eps kt, n^\alpha]$. 
We take such verifier $V$ for the 3-SAT language with $k$ chosen large enough to make sure that $\alpha < \eps kc$.

Consider any 3-CNF formula $\phi$. For convenience denote $r:= (1+\eps)kt=(1+\eps) kc\log n$. We construct $f(\phi)$ as in the proof of  Theorem~\ref{thm:qcsp_red}. The number of variables is $N=n^\alpha$.
For every possible sequence $s\in \B^r$ of $r$ random bits we define a constraint $C_s$, such that
given an assignment $v\in \B^N$, $C_s$ evaluates to $1$ if and only if $V$ accepts the proof v. 

We have defined $f(\phi)$, now we prove that it satisfies all the claimed properties. $N=n^\alpha$,  $M=n^{(1+\eps)kc}$ and $\frac{\alpha}{kc}<\eps$, hence $N\leq M^{\eps}$. If $\phi$ is satisfiable then there is a proof which is accepted for all possible sequences of random bits, hence all the constraints can be satisfied simultaneously, so $\val(f(\phi))=1$. Suppose $\phi$ is not satisfiable, then the probability of accepting a wrong proof is at most $2^{-kt}=n^{-kc}=M^{-1/(1+\eps)}\leq \frac{1}{M^{1-\eps}}$.
Finally, because of the bound on the free bit complexity of $V$, for every sequence of random bits $s\in \B^r$ there are at most $2^{\eps kt}=n^{\eps kc} \leq M^{\eps}$ sequences of bits encoding
the answers to the queries, which result in an acceptance.
By the definition of free bit complexity, those sequences can be efficiently listed,
hence the CSP instance $f(\phi)$ can be  constructed in polynomial time.
\end{proof}

\subsection{Construction.}
Let $\phi$ be an $N$-variable CSP instance with $M$ constraints such that $\fsat(\phi)\leq K$ (for a
parameter $K$ to be chosen later). We want to construct an automaton $\hat{A}_\phi$
of size polynomial in $K$ and the size of $\phi$ such that $\res(\hat{A}_\phi)\approx \frac{N}{\val(\phi)}$.

Using $A_\phi$ as in the previous sections gives an automaton
of superpolynomial size, so we need to tweak it. Take any constraint $C$ and suppose it depends on $q$ variables.
Consider the tree-gadget $T_C$ built for $C$. We cannot assume that $q=O(\log n)$ as in the 
Section~\ref{sec:power}. In consequence, $T_C$ can be of exponential size, because the only possible
bound on its number of leaves is $2^q$. However, we have a bound $K$ on the number of essentially satisfying assignments. We will modify the definition 
of $T_C$, so that its size depends polynomially on $K$ rather than $2^q$.

Observe that in the original construction, at most $K$ out of $2^q$ leaves of $T_C$ correspond to satisfying assignments
(think of $K$ much smaller than $2^q$). Imagine for a moment a subtree of $T_C$ corresponding to the at most $K$ satisfying assignments.
The size of such subtree is at most $NK$, but it is not yet a good candidate for our gadget,
because some transitions are not well 
defined. However, this is not difficult to fix to obtain an equivalent \textit{compressed tree-gadget}
$\hat{T}_C$ as follows. Take a variable $x_k$ (such that $C$ depends on $x_k$) and a node $q_k^w$ with $w=vc$ (then basically $c\in \B$ corresponds to the boolean value assigned to $x_k$). Suppose that every leaf in the subtree rooted at $q_k^w$ corresponds to a non-satisfying assignment. Suppose further that there is some satisfying assignment in the subtree $q_{k-1}^{v}$ (then the leaf with a satisfying assignment must necessarily lie in the subtree rooted at $q_k^{v\bar{c}}$). In such a case there is nothing interesting happening in the subtree rooted at $q_k^w$, it has height $h=N-k$ and all of its leaves have two edges (labeled $0$ and $1$) going back to the root $r$ of $T_C$. We remove this subtree rooted at $q_k^w$ and instead attach a path of length $h$ to $q_k^w$, we also add transitions from the endpoint of the path to $r$. 

For an illustration of the compression procedure refer to Figure~\ref{figure:gadget_compressed}. The leftmost vertex in the figure is $q_{k-1}^v$, his two children are $q_k^{v0}$ and $q_k^{v1}$, the subtree rooted at $q_k^{v0}$ gets compressed. In a tree-gadget the black leaf in the picture has transitions (labeled by $0$ and $1$) to the sink vertex $s$ and all the grey vertices have transitions back to the root $r$. We obtain the compressed tree-gadget $\hat{T}_C$ by applying the above transformation to every relevant node $q_k^w$.

The automaton $\hat{A}_\phi$ is built analogously to $A_\phi$, with the crucial difference that we use $\hat{T}_C$
instead of $T_C$. One can see that $\hat{T}_C$ can be constructed in time polynomial in its size
by proceeding from the root down to the leaves. Furthermore, the resulting automaton
$\hat{A}_\phi$ is small.

\begin{figure}[t]
\centering
\includegraphics[width=\textwidth]{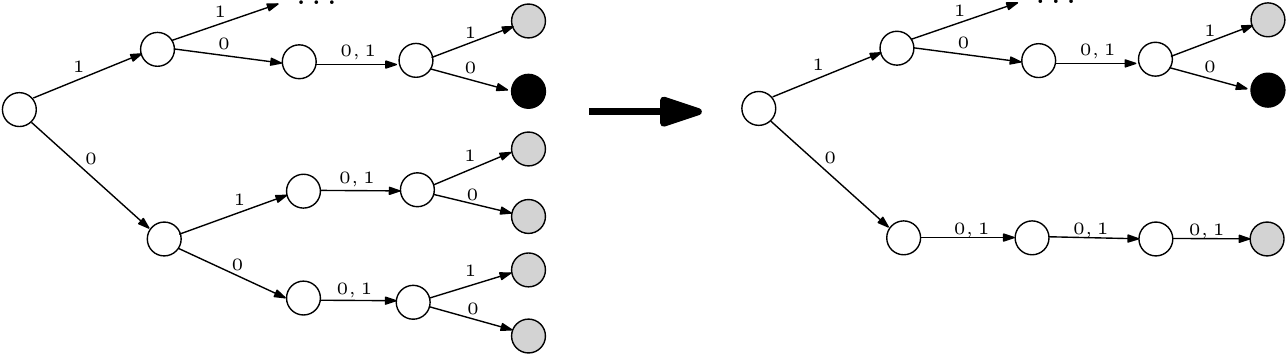}
\caption{Compressing the tree. The grey and black leaves correspond to non-satisfying and satisfying assignments,
respectively.}
\label{figure:gadget_compressed}
\end{figure}

\begin{lemma}
\label{thm:small_compressed}
Suppose $\phi$ is an $N$-variable CSP instance with $M$ constraints and $\fsat(\phi)\leq K$. Then the size of
$\hat{A}_\phi$ is $O(MN^2K)$.
\end{lemma}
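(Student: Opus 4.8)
The plan is to bound the number of states in each compressed tree-gadget $\hat{T}_C$ individually, since $\hat{A}_\phi$ is just the disjoint union of $\hat{T}_{C_1},\dots,\hat{T}_{C_M}$ together with the single sink $s$; so it suffices to prove $|\hat{T}_C| = O(N^2K)$ whenever $\fsat(C)\le K$, and then sum over the $M$ constraints. (If a constraint $C$ has no satisfying assignment its gadget collapses to a single path of length $N$ back to the root, contributing only $O(N)$ states, so we may assume $C$ has at least one satisfying assignment below its root.)

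First I would single out the \emph{skeleton} $S_C$ of $T_C$, namely the union of all root-to-leaf paths whose leaf corresponds to a satisfying assignment of $C$. Since $\fsat(C)\le K$ there are at most $K$ such leaves, and each path has length $\le N$ (the tree has $N+1$ levels), so $S_C$ is a subtree with at most $K$ leaves and depth $\le N$; hence $|S_C|\le NK+1 = O(NK)$. By construction every vertex of $\hat{T}_C$ that survives the compression step is a vertex of $S_C$, so these contribute only $O(NK)$ states.

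It remains to count the states created by the compression step. A compression is applied at a node $q_k^w$, $w=vc$, exactly when $C$ depends on $x_k$ (so the parent $q_{k-1}^v$ is a branching node), the subtree at $q_k^w$ has no satisfying leaf, and the subtree at $q_{k-1}^v$ does — i.e., $q_{k-1}^v$ is a \emph{branching} vertex of $S_C$ and $q_k^w$ is its off-skeleton child. Each branching vertex of $S_C$ has at most one off-skeleton child (the other child then containing all of its satisfying leaves; if both children are on the skeleton, no compression occurs there), so the number of compressions is at most $|S_C| = O(NK)$. Each compression replaces a dead subtree of height $h = N-k\le N$ by a path with $h\le N$ new vertices, so the compression introduces at most $O(NK)\cdot N = O(N^2K)$ states altogether. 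Combining, $|\hat{T}_C|\le |S_C| + O(N^2K) = O(N^2K)$, and summing over the $M$ gadgets and adding the sink gives $|\hat{A}_\phi| = O(MN^2K)$.

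The main thing to get right is the bookkeeping of the third paragraph — pinning down precisely which nodes get compressed and checking that each is responsible for at most one attached path of length at most $N$; once that correspondence with branching skeleton vertices is established, the rest is an elementary counting argument on trees.
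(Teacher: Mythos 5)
Your proof is correct and takes essentially the same approach as the paper: bound each compressed gadget $\hat{T}_C$ by $O(N^2K)$ via the skeleton/attached-paths decomposition, then sum over the $M$ gadgets. The paper's argument is precisely your $S_C$ (called $T_0$ there, obtained by deleting the compression paths), bounded by $O(NK)$, plus at most one attached path of length $\le N$ per skeleton vertex; your third paragraph just spells out the ``at most one path per vertex'' observation which the paper asserts more tersely.
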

\begin{proof}
The automaton consists of $M$ gadgets and $1$ additional state $s$. It suffices to prove that every gadget has size $O(N^2K)$. Fix any constraint $C$ in $\phi$ and consider the compressed tree-gadget $\hat{T}_C$. Suppose we remove from $\hat{T}_C$ all leaves corresponding to non-satisfying assignments together with the paths created in the compression procedure. What remains is a subtree $T_0$ of height $N$ with at most $K$ leaves, each corresponding to the satisfying assignments of $C$. The size of $T_0$ is $O(NK)$. To get $\hat{T}_C$ back from $T_0$, we need to attach paths (of length at most $N$) to some vertices of $T_0$ (at most one path per vertex). We add at most $O(NK)$ paths consisting of at most $N$ nodes each, hence the total size of $\hat{T}_C$ is $O(N^2K)$.
\end{proof}

\subsection{Properties of $\hat{A}_\phi$.}
The lemma below summarizes the properties of $\hat{A}_\phi$. Its proof is very similar to the
proofs of the lemmas summarizing the properties of $A_\phi$ and hence skipped.

\begin{lemma}
\label{thm:compr_aut}
Let $\phi$ be an $N$-variable CSP instance with $M$ constraints and $\fsat(\phi)\leq K$. Then $\hat{A}_\phi$ is a 
synchronizing automaton of size $O(MN^2K)$, which can be constructed in polynomial time. Furthermore, if $\phi$
is satisfiable then $\res(\hat{A}_\phi)\leq N+2$ and otherwise $\res(\hat{A}_\phi)\geq \frac{N+1}{\val(\phi)}$.
\end{lemma}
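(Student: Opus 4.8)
The plan is to prove Theorem~\ref{thm:compr_aut} by establishing three things: synchronizability and the size bound, the upper bound $\res(\hat{A}_\phi)\leq N+2$ when $\phi$ is satisfiable, and the lower bound $\res(\hat{A}_\phi)\geq \frac{N+1}{\val(\phi)}$ otherwise. The size bound is exactly Lemma~\ref{thm:small_compressed}, and polynomial-time constructibility follows since $\hat{T}_C$ is built top-down in time polynomial in its size (using the list of at most $K$ satisfying assignments of $C$ guaranteed by the representation of a CSP with $\fsat\le K$). Synchronizability is immediate once the upper bound case analysis is in place, since for satisfiable gadgets we exhibit a reset word, and in general one can reset the gadgets one at a time.

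The key technical point is to check that the compression preserves the ``action'' of binary words on the roots exactly as in the uncompressed construction, i.e., an analogue of Corollary~\ref{corr:gadget_synchro} holds for $\hat{T}_C$: for a binary word $w=vc$ with $|v|=N$ and $c\in\{0,1\}$, we have $\delta(r,w)=s$ if $v$ satisfies $C$ and $\delta(r,w)=r$ otherwise. The only subtlety is that $v$ might steer the walk into a compressed subtree (rooted at some $q_k^w$ all of whose original leaves were non-satisfying); there the walk follows the attached path of length $h=N-k$ and, after exactly $N-k$ more letters, reaches the path's endpoint, whose transitions lead back to $r$ — so after reading all of $vc$ (which is $N-k$ letters past the node $q_k^w$ at level $k$... careful: $|vc| = N+1$, and reaching $q_k^w$ consumes $k$ letters, leaving $N+1-k$; the path has length $h=N-k$, and the endpoint sends us to $r$ on either letter, consuming the last one) we end at $r$. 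Conversely, if $v$ satisfies $C$, the walk stays in the surviving subtree $T_0$, reaches the satisfying leaf $q_N^v$ after $N$ letters, and the final letter $c$ sends it to $s$. With this established, the proofs of Lemma~\ref{lemma:qcsp_synchro_sat}, Lemma~\ref{lemma:synchro_unsat2}, and Lemma~\ref{lemma:qcsp_synchro_unsat} carry over verbatim: reading $2$ collapses each gadget to its root (we set $\delta(q,2)=r$ for all $q\in\hat{T}_C$), reading $v0$ with $v$ a satisfying assignment of $\phi$ then sends every root to $s$, giving $\res(\hat{A}_\phi)\leq N+2$; and for the lower bound, a word synchronizing the set $R$ of roots can be assumed $2$-free and of length a multiple of $N+1$, decomposes as $v_1c_1\cdots v_kc_k$, and each constraint must be satisfied by some $v_i$ (else its root returns to itself), so $k\geq 1/\val(\phi)$ and $|w|\geq \frac{N+1}{\val(\phi)}$.

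The main obstacle I anticipate is purely bookkeeping: making the ``level invariant'' precise in the presence of the compression paths. In the uncompressed gadget one has the clean statement $\delta(r,w')\in L_{|w'|\bmod(N+1)}$ for every prefix $w'$; with compression, the states on an attached path are not literally in any $L_j$, so one needs to either (i) index the path states by a notional level (the $j$-th path node from $q_k^w$ ``is at level $k+j$'') and re-verify that transitions still increase the notional level by one and that the level-$N$ endpoint behaves like a non-satisfying leaf, or (ii) argue directly that after reading any $N+1$ binary letters from $r$ we are back at $r$ unless we passed through a genuine satisfying leaf. Option (ii) is cleanest: one shows by tracing that from $r$, any binary word of length exactly $N+1$ returns to $r$ iff it does not encode (in its first $N$ letters) a satisfying assignment of $C$, and lands in $s$ otherwise — this is the compressed analogue of Corollary~\ref{corr:gadget_synchro}, and everything else is a transcription of the Section~\ref{sec:power} arguments. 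Since the statement explicitly says the proof is ``very similar'' and ``skipped,'' I would in fact present only this analogue of Corollary~\ref{corr:gadget_synchro} in detail and then cite the earlier lemmas' proofs nearly verbatim for the rest.
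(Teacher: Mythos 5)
Your proposal is correct and follows exactly the route the paper has in mind: the paper explicitly omits this proof as ``very similar to the proofs of the lemmas summarizing the properties of $A_\phi$,'' and you fill in precisely the missing piece — the compressed analogue of Corollary~\ref{corr:gadget_synchro} (that from $r$, any binary word of length $N+1$ lands in $s$ if its first $N$ letters satisfy $C$ and otherwise returns to $r$, with the attached paths playing the role of the pruned subtrees and respecting the notional level count) — and then transplants the arguments of Lemmas~\ref{lemma:qcsp_synchro_sat}, \ref{lemma:synchro_unsat2}, and~\ref{lemma:qcsp_synchro_unsat}, together with Lemma~\ref{thm:small_compressed} for the size bound. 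Your option (ii), verifying the length-$(N+1)$ return/absorption behavior directly rather than re-indexing levels, is the cleaner way to package the invariant, and the rest is, as you say, transcription.
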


\begin{proof}[of Theorem~\ref{thm:power_hard}]
Fix any $\eps>0$. We reduce 3-SAT to $\SynAppx(\{0,1,2\}, n^\eps)$. Let $\phi$ be an $n$-variable 3-CNF formula. Then by Theorem~\ref{thm:qcsp_free} 
we can construct an $N$-variable CSP instance $f(\phi)$ with $M=\poly(n)$ constraints and $\fsat(f(\phi))\leq K= M^{\eps}$, 
where $N\leq M^\eps$. We know that if $\phi$ is satisfiable then $f(\phi)$ is satisfiable as well and if $\phi$ is not 
satisfiable then $\val(f(\phi))\leq \frac{1}{M^{1-\eps}}$. Then 
by Theorem~\ref{thm:compr_aut} we can construct $\hat{A}_{f(\phi)}$, which is an automaton of size 
$O(MKN^2)=O(M^{1+3\eps})$. If $\phi$ is satisfiable, $\res(\hat{A}_{f(\phi)})\leq N+2$ and if $\phi$ is not satisfiable
then $\res(\hat{A}_{f(\phi)})\geq \frac{N+1}{\val(f(\phi))}$. The ratio of those two bounds is:
$$\frac{N+1}{(N+2)\val(f(\phi))} = \Theta\inparen{\frac{1}{\val(f(\phi))}} = \Theta\inparen{M^{1-\eps}}$$
The size of the automaton $\hat{A}_{f(\phi)}$ is $G=O(M^{1+3\eps})$, so the above ratio can be related to the size of
the automaton as $\Omega\inparen{G^{\frac{1-\eps}{1+3\eps}}}=\Omega\inparen{G^{1-4\eps}}$. Hence assuming 
$\PTIME\neq \NP$, approximating the shortest reset word within ratio $G^{1-4\eps}$ in polynomial time is not possible. 
\end{proof}
\bibliographystyle{splncs03}
\bibliography{references}

\appendix

\section{Expanders and Error Reduction}\label{sec:expanders}
In this section we show how to deduce a subconstant error PCP theorem from the standard version, i.e., 
Theorem~\ref{thm:pcp}. Thus, we prove the following.

\restatablesubconst*

To this end we need a method of reusing random bits when repeating a random experiment introduced by Impagliazzo and Zuckerman~\cite{Recycle}. The idea is that, instead of generating fresh random bits for each repetitions of the experiment, we use a random walk on an expander to construct a pseudorandom sequence of bits, which is then used in subsequent repetitions. Because of expanding properties of such graphs, which we summarize below, this is enough to significantly decrease the error probability. A good reference for expander graphs and pseudorandom constructions is~\cite{AroraBarak}. For completeness we include all essential definitions, but we refer to the book for the proofs.

Whenever $G$ appears in the following text, it denotes an undirected $d$-regular graph on $n$ vertices, possibly containing loops and parallel edges.

\begin{definition}[$\lambda(G)$] Let $A_G$ be the random walk matrix of $G$ (that is, $A_G$ is the adjacency matrix of $G$ with each entry scaled by $\frac{1}{d}$). Let $\lambda_1, \lambda_2,\ldots, \lambda_n \in [-1,1]$ be the eigenvalues of $A_G$, sorted so that $|\lambda_1|\geq |\lambda_2|\geq \ldots \geq |\lambda_n|$. We define $\lambda(G)$ to be $|\lambda_2|$.
\end{definition}

\begin{definition}[$(n,d,\lambda)$-expander] If $G$ is an $n$-vertex $d$-regular multigraph with $\lambda(G)\leq \lambda<1$, then we say that $G$ is an $(n,d,\lambda)$-graph.
\end{definition}

It turns out that constructing a family of $(n,d,\lambda)$-graphs for some fixed $\lambda>0$ is a pretty simple task, since a random $d$-regular graph is 
an expander with high confidence. However, a true challenge is to construct expanders explicitly and without any use of random bits. A beautiful example 
of such a construction was given by Margulis~\cite{Margulis73}, its analysis was later improved and simplified first by Gabber and Galil~\cite{Galil}, 
and then by Jimbo and Maruoka~\cite{Jimbo85}, to yield the following.

\begin{theorem}\label{thm:margulis}
Let $G_{n^2}$ be the $8$-regular graph on vertex set $\mathbb{Z}_n^2$, with edges defined as follows: $(x,y)$ has neighbors
$(x\pm 2y,y), (x\pm (2y+1),y), (x,y\pm 2x), (x,y\pm (2x+1))$ (addition is performed modulo $n$). Then $G_{n^2}$ is an
$(n^2, 8, \frac{5\sqrt{2}}{8})$-graph.
\end{theorem}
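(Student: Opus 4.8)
The plan is to prove the spectral bound $\lambda(G_{n^2}) \le \frac{5\sqrt 2}{8}$ by Fourier analysis on the group $\mathbb{Z}_n^2$, following the classical argument of Gabber--Galil and its quantitative sharpening by Jimbo--Maruoka. It is worth noting at the outset that for the intended use (Theorem~\ref{thm:subconst_pcp} and the expander-walk error reduction) \emph{any} explicit family with a fixed $\lambda < 1$ already suffices, so the real content is qualitative expansion and the precise value $\frac{5\sqrt2}{8}$ is a refinement of the same analysis.

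First I would reduce the claim to an operator-norm statement. Let $M$ be the unnormalized $8$-regular adjacency matrix of $G_{n^2}$, so $A_{G_{n^2}} = \tfrac18 M$; one checks from the edge rule that $M$ is symmetric. The all-ones function is the top eigenvector with eigenvalue $8$, and $M$ preserves the space $V_0$ of mean-zero functions, so $\lambda(G_{n^2}) = \tfrac18 \sup_{f \in V_0}\|Mf\|/\|f\|$. Thus it suffices to show $\|Mf\| \le 5\sqrt 2\,\|f\|$ for every mean-zero $f : \mathbb{Z}_n^2 \to \mathbb{C}$. Write $M = M_1 + M_2$, where $M_1$ collects the four horizontal moves $x \mapsto x \pm 2y,\ x \pm(2y+1)$ and $M_2$ the four vertical moves $y \mapsto y \pm 2x,\ y \pm(2x+1)$. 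Evaluating on characters $\chi_{a,b}(x,y) = \omega^{ax+by}$ with $\omega = e^{2\pi i/n}$ gives $M_1\chi_{a,b} = (1+\omega^a)\chi_{a,b+2a} + (1+\omega^{-a})\chi_{a,b-2a}$, and symmetrically $M_2\chi_{a,b} = (1+\omega^b)\chi_{a+2b,b} + (1+\omega^{-b})\chi_{a-2b,b}$. Hence, Fourier-side, $M_1$ is supported on pairs of frequencies sharing the first coordinate $a$ (a $\pm 2a$ shear of the second coordinate, with phase weights of modulus $|1+\omega^{\pm a}| = 2|\cos(\pi a/n)| \le 2$), while $M_2$ is supported on pairs sharing the second coordinate $b$. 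By Parseval, $\|Mf\|^2 = \sum_{(a,b)}\bigl|\widehat{M_1 f}(a,b) + \widehat{M_2 f}(a,b)\bigr|^2$, reducing everything to a combinatorial estimate on the coefficients $\hat f$ over the nonzero frequencies.

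The heart of the argument is a ``domination'' partition of the nonzero frequencies together with a Cauchy--Schwarz bound exploiting the shear structure. Split $\mathbb{Z}_n^2 \setminus \{(0,0)\}$ into a set $P$ on which the shift $b \mapsto b+2a$ induced by $M_1$ moves far (i.e.\ $\|2a/n\|$ is bounded below, $\|\cdot\|$ denoting distance to the nearest integer) and its complement $Q$, on which instead the vertical shift $a \mapsto a+2b$ of $M_2$ moves far. On each part the relevant operator restricted there is a near-injective, ``spreading'' map of frequency space, so the bilinear sums $\sum \overline{\hat f(a,b)}\,\hat f(a,b\pm 2a)$ (respectively with $a \pm 2b$) are controlled by Cauchy--Schwarz against $\sum|\hat f|^2$ with a loss depending only on the threshold defining $P$; the bound $2$ on the phase weights, the triangle inequality over the two moves of each $M_i$, and the cross term $2\,\mathrm{Re}\langle M_1 f, M_2 f\rangle$ are folded into the same estimate. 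Optimizing the threshold separating $P$ from $Q$ is exactly what produces the constant $5\sqrt 2$, i.e.\ $\lambda(G_{n^2}) \le \frac{5\sqrt2}{8}$; a cruder choice of threshold already yields some explicit $\lambda < 1$, which is all the PCP application requires.

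The main obstacle is that $G_{n^2}$ is not a Cayley graph of any abelian group — the multipliers $2y$ and $2x$ in the edge rule depend on the vertex — so $M$ is \emph{not} diagonalized by the characters $\chi_{a,b}$; one obtains only the block/shear structure above, with no eigenvalue formula to read off, and the bound must be extracted from the quantitative shear analysis. The delicate point is the modular wrap-around: when $\gcd(n,2a)$ is large the shift $b \mapsto b+2a$ has a short cycle and $M_1$ fails to spread frequencies, which is precisely the regime that the partition offloads onto $M_2$. Verifying that the two halves of the partition cover all nonzero frequencies with matching constants, and that the $+1$ perturbations (the $x\pm(2y+1)$, $y\pm(2x+1)$ moves) — which are what make the finite graph a genuine expander rather than a disjoint union of cycles — contribute only the harmless unimodular factors $\omega^{\pm a}$, $\omega^{\pm b}$ absorbed into the weights, is the technical crux.
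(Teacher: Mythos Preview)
The paper does not prove this theorem at all: it is quoted as a known result, with the proof attributed to Margulis, Gabber--Galil, and Jimbo--Maruoka, and then used as a black box in the derivation of Theorem~\ref{thm:subconst_pcp}. So there is no ``paper's own proof'' to compare against.

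That said, your sketch is a faithful outline of exactly the argument in the cited references. The reduction to bounding $\|Mf\|$ on mean-zero functions, the decomposition $M = M_1 + M_2$ into the horizontal and vertical shears, the observation that characters are not eigenfunctions but are mapped to other characters with phase weights $1+\omega^{\pm a}$ (respectively $1+\omega^{\pm b}$), and the partition of nonzero frequencies into a region where $M_1$ ``spreads'' and a complementary region where $M_2$ does --- this is precisely the Gabber--Galil strategy, and the optimization of the threshold to obtain $5\sqrt{2}$ rather than a cruder constant is precisely the Jimbo--Maruoka contribution. Your remark that any $\lambda < 1$ already suffices for the application is also exactly right and matches the spirit in which the paper invokes the result.

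If one wanted to be exacting, the sketch stops short of the actual inequality manipulation that produces $5\sqrt{2}$ (the Jimbo--Maruoka step is a somewhat delicate weighted Cauchy--Schwarz / quadratic-form argument, not just ``optimizing a threshold''), so as written it is a plan rather than a proof. But as a plan it is the correct one, and it is the one the paper implicitly points to by its citations.
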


\noindent The following theorem can be now used to reduce the error probability.

\begin{theorem}[Expander walks, 21.12 in \cite{AroraBarak}]\label{thm:expander_walks}
Let $G$ be an $(n,d,\lambda)$-graph and let $B\subseteq [n]$ be a set satisfying $|B|\leq \beta n$ for some $\beta \in (0,1)$. Let $X_1,X_2, \ldots, X_k$ be random 
variables denoting a $(k-1)$-step random walk in $G$, meaning that $X_1$ is chosen uniformly from $[n]$ and $X_{i+1}$ is a uniform random neighbor of 
$X_i$. Then $P(X_1 \in B \wedge X_2 \in B \wedge \ldots \wedge X_k \in B)\leq ((1-\lambda)\sqrt{\beta}+\lambda)^{k-1}$.
\end{theorem}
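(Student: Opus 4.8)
The plan is to give the standard spectral argument: reformulate the survival event in terms of the random-walk matrix sandwiched between the coordinate projections onto $B$, and reduce everything to one clean operator-norm estimate.

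First I would fix notation. Let $A$ be the normalized random-walk matrix of $G$; it is symmetric, has $\|A\|_2 = 1$ with the all-ones vector $\mathbf 1$ as the top eigenvector, and every other eigenvalue has absolute value at most $\lambda$. Let $P$ be the diagonal $0/1$ matrix that projects onto the coordinates in $B$, so $P^2 = P = P^\top$ and $\|P\|_2 \le 1$, and let $\mathbf u = \tfrac1n\mathbf 1$ be the uniform distribution, with $\|\mathbf u\|_2 = n^{-1/2}$. Viewing distributions as column vectors, the sub-distribution of $X_i$ restricted to the event $\{X_1\in B,\dots,X_i\in B\}$ is precisely the vector $p^{(i)}$ given by $p^{(1)} = P\mathbf u$ and $p^{(i+1)} = PAp^{(i)}$ --- applying $A$ advances the walk one step, multiplying by $P$ enforces membership in $B$ --- so that
\[
\pr[X_1\in B \wedge \dots \wedge X_k\in B] = \|p^{(k)}\|_1 .
\]
Note $\|p^{(1)}\|_2 = \|\tfrac1n\mathbf 1_B\|_2 = \sqrt{|B|}/n$, and every $p^{(i)}$ is supported on $B$ since it has the form $P(\cdots)$.

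The only substantive step is the bound $\|PAP\|_2 \le (1-\lambda)\sqrt{\beta} + \lambda$. I would avoid tracking the component of a vector along $\mathbf 1$ and its orthogonal part separately through a step: that route loses a factor $\sqrt{\lambda^2+\beta}$ per step, which is useless exactly in the regime $\lambda^2+\beta > 1$ that the error-reduction application lands in (there $\lambda$ is close to $1$ and $\beta = \tfrac12$). Instead, assuming $\lambda > 0$ (if $\lambda = 0$ then $A$ itself is the uniform projector and the estimate is immediate), write $A$ as a convex combination $A = \lambda R + (1-\lambda)J$ with $J = \tfrac1n\mathbf 1\mathbf 1^\top$ and $R = \tfrac1\lambda\bigl(A - (1-\lambda)J\bigr)$. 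Then $R$ is symmetric with eigenvalue $1$ on $\mathbf 1$ and eigenvalues $\lambda_i/\lambda \in [-1,1]$ on $\mathbf 1^\perp$, so $\|R\|_2 = 1$; and $PJP = \tfrac1n\mathbf 1_B\mathbf 1_B^\top$ has norm $|B|/n$. By submultiplicativity, $\|PAP\|_2 \le \lambda\|PRP\|_2 + (1-\lambda)\|PJP\|_2 \le \lambda + (1-\lambda)|B|/n \le (1-\lambda)\sqrt{\beta} + \lambda =: \gamma$, using $|B|/n \le \beta \le \sqrt{\beta}$.

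Finally I would put the pieces together. Since $p^{(i)} = Pp^{(i)}$, we have $p^{(i+1)} = PAp^{(i)} = PAPp^{(i)}$, hence $\|p^{(i+1)}\|_2 \le \gamma\|p^{(i)}\|_2$, and iterating from $i=1$ gives $\|p^{(k)}\|_2 \le \gamma^{k-1}\sqrt{|B|}/n$. As $p^{(k)}$ is supported on the at most $\beta n$ coordinates of $B$, Cauchy--Schwarz gives $\|p^{(k)}\|_1 \le \sqrt{|B|}\,\|p^{(k)}\|_2 \le \sqrt{|B|}\cdot\gamma^{k-1}\sqrt{|B|}/n = (|B|/n)\,\gamma^{k-1} \le \gamma^{k-1}$, which is the claimed inequality (the case $k=1$ being just $|B|/n \le \beta \le 1$). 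The main obstacle is obtaining the operator-norm estimate with the right constant: naive decompositions give a per-step factor that is too large in the relevant parameter range, and it is the convex-combination identity $A = \lambda R + (1-\lambda)J$ --- which makes the $P$-sandwich behave multiplicatively --- that yields the sharp bound $(1-\lambda)\sqrt{\beta}+\lambda$.
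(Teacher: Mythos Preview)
The paper does not prove this theorem at all: it is quoted verbatim from Arora--Barak with the explicit remark ``we refer to the book for the proofs,'' and is then used as a black box in the proof of Theorem~\ref{thm:subconst_pcp}. So there is nothing to compare your argument against in the paper itself.

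That said, your proof is correct and is essentially the standard spectral argument one finds in Arora--Barak and related expositions. The identification $\pr[X_1\in B,\dots,X_k\in B]=\|(PA)^{k-1}P\mathbf u\|_1$ is right, and the key step---writing $A=(1-\lambda)J+\lambda R$ with $\|R\|_2\le 1$ to get $\|PAP\|_2\le (1-\lambda)|B|/n+\lambda$---is exactly the clean way to do it; you in fact obtain the slightly stronger per-step factor $(1-\lambda)\beta+\lambda$ before relaxing $\beta\le\sqrt\beta$ to match the stated bound. Your remark that naively tracking the $\mathbf 1$-parallel and $\mathbf 1$-perpendicular parts separately only yields a factor like $\sqrt{\lambda^2+\beta}$, which can exceed $1$ in the regime $\lambda\approx 1$, $\beta=\tfrac12$ relevant to the application here, is a nice diagnostic of why the convex-combination identity is the right move.
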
 

\restatablesubconst*

\begin{proof}
Take any language $L\in \NP$, we would like to show that there exists a polynomial time $(\frac{1}{n}, O(\log n), O(\log n))-$verifier for $L$. By the basic Theorem~\ref{thm:pcp} we know that there is a $(\frac{1}{2}, O(\log n), O(1))-$verifier for $L$. Suppose it consumes at most $r\log n$ random bits. We intend to define another verifier $V'$, which makes more queries to the proof and needs 
more random bits, but its failure probability is at most $\frac{1}{n}$. Running $V$ independently $\Omega(\log n)$ times and checking
if there was at least one reject is not acceptable, as it increases the number of used random bits to $\Omega(\log^{2}n)$ bits.
For this reason, instead of making $\Omega(\log n)$ fully independent runs, we save some random bits using expanders. Let $k=c\log n$ be the number of repetitions. Fix an input $x$ of length $n$. We construct an 
$(\mathcal{O}(2^{r\log n}),8,\lambda)$-graph with $\lambda=\frac{5\sqrt{2}}{8}$ from Theorem~\ref{thm:margulis}, 
select a random starting vertex $X_1$ there, and choose a random walk of length $k-1$ starting from $X_{1}$
obtaining vertices $X_2, X_3, \ldots, X_k$. Now run $V$ $k$ times using $X_{i}$ as the required stream of $r\log n$ bits for the $i$-th run.
Answer $1$ if and only if all runs returned $1$. To obtain the sequence $X_1, \ldots, X_k$ we use $r$ random bits for $X_1$ and $(k-1)\cdot 3$ bits for $X_2, \ldots X_k$ (we need $3$ bits to pick one neighbor out of five), $O(\log n)$ random bits in total. Let us now calculate the error probability. According to Theorem~\ref{thm:expander_walks} (where we choose $B$ to be the set of length$-(r\log n)$ bitstrings which cause a false positive for $x$), it is at most:
$$\inparen{\inparen{1-\lambda}\sqrt{\frac{1}{2}} + \lambda}^{k-1}=\inparen{\inparen{1-\frac{5\sqrt{2}}{8}}\sqrt{\frac{1}{2}} + \frac{5\sqrt{2}}{8}}^{k-1}<(0.97) ^{k}$$
Which is less then $\frac{1}{n}$ when we take $c=\frac{1}{\log 0.97}$. To finish, let us note that the query complexity of $V'$ is $k\cdot O(1) = O(\log n)$ as claimed.
\end{proof}

\end{document}